\setlist[itemize]{leftmargin=*}
\setlist[enumerate]{leftmargin=*}
\newcommand{\idle}{0}
\newcommand{\nidle}{\phi}
\newcommand{\nnu}{{m_x}}
\newcommand{\nnv}{{m_v}}
\newcommand{\puppm}[2]{\ket{\mathrm{PPM},#2}_{Q^{#1}}}
\newtheorem{theorem}{Theorem}
\newtheorem{lemma}{Lemma}
\newtheorem{definition}{Definition}
\newtheorem{proposition}{Proposition}
\newtheorem{remark}{Remark}
\DeclareMathAlphabet{\eurm}{U}{eur}{m}{n}
\DeclareMathAlphabet{\mathbsf}{OT1}{cmss}{bx}{n}
\DeclareMathAlphabet{\mathssf}{OT1}{cmss}{m}{sl}
\DeclareMathAlphabet{\mathcsf}{OT1}{cmss}{sbc}{n}
\DeclareSymbolFont{bsfletters}{OT1}{cmss}{bx}{n}  
\DeclareSymbolFont{ssfletters}{OT1}{cmss}{m}{n}
\DeclareMathSymbol{\bsfGamma}{0}{bsfletters}{'000}
\DeclareMathSymbol{\ssfGamma}{0}{ssfletters}{'000}
\DeclareMathSymbol{\bsfDelta}{0}{bsfletters}{'001}
\DeclareMathSymbol{\ssfDelta}{0}{ssfletters}{'001}
\DeclareMathSymbol{\bsfTheta}{0}{bsfletters}{'002}
\DeclareMathSymbol{\ssfTheta}{0}{ssfletters}{'002}
\DeclareMathSymbol{\bsfLambda}{0}{bsfletters}{'003}
\DeclareMathSymbol{\ssfLambda}{0}{ssfletters}{'003}
\DeclareMathSymbol{\bsfXi}{0}{bsfletters}{'004}
\DeclareMathSymbol{\ssfXi}{0}{ssfletters}{'004}
\DeclareMathSymbol{\bsfPi}{0}{bsfletters}{'005}
\DeclareMathSymbol{\ssfPi}{0}{ssfletters}{'005}
\DeclareMathSymbol{\bsfSigma}{0}{bsfletters}{'006}
\DeclareMathSymbol{\ssfSigma}{0}{ssfletters}{'006}
\DeclareMathSymbol{\bsfUpsilon}{0}{bsfletters}{'007}
\DeclareMathSymbol{\ssfUpsilon}{0}{ssfletters}{'007}
\DeclareMathSymbol{\bsfPhi}{0}{bsfletters}{'010}
\DeclareMathSymbol{\ssfPhi}{0}{ssfletters}{'010}
\DeclareMathSymbol{\bsfPsi}{0}{bsfletters}{'011}
\DeclareMathSymbol{\ssfPsi}{0}{ssfletters}{'011}
\DeclareMathSymbol{\bsfOmega}{0}{bsfletters}{'012}
\DeclareMathSymbol{\ssfOmega}{0}{ssfletters}{'012}
\newcommand{\calB}{{\mathcal{B}}}
\newcommand{\calD}{{\mathcal{D}}}
\newcommand{\calE}{{\mathcal{E}}}
\newcommand{\calF}{{\mathcal{F}}}
\newcommand{\calH}{{\mathcal{H}}}
\newcommand{\calI}{{\mathcal{I}}}
\newcommand{\calN}{{\mathcal{N}}}
\newcommand{\calP}{{\mathcal{P}}}
\newcommand{\calR}{{\mathcal{R}}}
\newcommand{\calU}{{\mathcal{U}}}
\newcommand{\calV}{{\mathcal{V}}}
\newcommand{\calX}{{\mathcal{X}}}
\newcommand{\calY}{{\mathcal{Y}}}
\newcommand{\calW}{{\mathcal{W}}}
\newcommand{\calZ}{{\mathcal{Z}}}
\renewcommand{\P}[2][]{{\mathbb{P}_{#1}}{\left(#2\right)}}
\newcommand{\D}[2]{{{\mathbb{D}}\!\left({#1\Vert#2}\right)}}
\newcommand{\avgI}[1]{{{\mathbb{I}}\!\left(#1\right)}}
\newcommand{\avgH}[1]{{\mathbb{H}}\!\left(#1\right)}
\newcommand{\Hb}[1]{{\mathbb{H}_b}\left(#1\right)}
\newcommand{\card}[1]{\ensuremath{\left|{#1}\right|}}           
\newcommand{\norm}[2][]{\ensuremath{{\left\Vert{#2}\right\Vert}_{#1}}}   
\newcommand{\eqdef}{\ensuremath{\triangleq}}                    
\newcommand{\intseq}[2]{\ensuremath{\llbracket{#1},{#2}\rrbracket}}  
\newcommand{\indic}[1]{\ensuremath{\mathds{1}\!\left\{#1\right\}}}
\renewcommand{\leq}{\leqslant}
\renewcommand{\geq}{\geqslant}
\newcommand{\tr}[1]{\ensuremath{\text{\textnormal{tr}}\left(#1\right)}}  
\newcommand{\proddist}{%
  \mathchoice{\raisebox{1pt}{$\displaystyle\otimes$}}
             {\raisebox{1pt}{$\otimes$}}
             {\raisebox{0.5pt}{\scalebox{0.7}{$\scriptstyle\otimes$}}}
             {\raisebox{0.4pt}{\scalebox{0.6}{$\scriptscriptstyle\otimes$}}}}
\newcommand{\pn}{{\proddist n}}
\acrodef{AEP}{Asymptotic Equipartition Property}
\acrodef{AoA}{Angle of Arrival}
\acrodef{AWGN}{Additive White Gaussian Noise}
\acrodef{AVC}[AVC]{Arbitrarily Varying Channel}
\acrodef{BER}{Bit-Error-Rate}
\acrodef{BEC}{Binary Erasure Channel}
\acrodef{BPSK}{Binary Phase-Shift Keying}
\acrodef{BSC}{Binary Symmetric Channel}
\acrodef{BICM}[BICM]{Bit-Interleaved Coded-Modulation}
\acrodef{CDF}[CDF]{Cumulative Distribution Function}
\acrodef{CGF}[CGF]{Cumulant Generating Function}
\acrodef{CLT}[CLT]{Central Limit Theorem}
\acrodef{CSI}[CSI]{Channel State Information}
\acrodef{DMC}[DMC]{Discrete Memoryless Channel}
\acrodef{DMS}[DMS]{Discrete Memoryless Source}
\acrodef{FER}[FER]{Frame Error Rate}
\acrodef{iid}[i.i.d.]{independent and identically distributed}
\acrodef{IoT}[IoT]{Internet of Things}
\acrodef{LPD}[LPD]{Low Probability of Detection}
\acrodef{LDPC}[LDPC]{Low-Density Parity-Check}
\acrodef{MAC}[MAC]{multiple-access channel}
\acrodef{MGF}[MGF]{Moment Generating Function}
\acrodef{MLC}[MLC]{Multi-Level Coding}
\acrodef{MIMO}[MIMO]{Multiple-Input Multiple-Output}
\acrodef{MISO}{Multiple-Input Single-Output}
\acrodef{MSD}[MSD]{Multi-Stage Decoding}
\acrodef{PDF}[PDF]{Probability Distribution Function}
\acrodef{PMF}[PMF]{Probability Mass Function}
\acrodef{PPM}[PPM]{Pulse Position Modulation}
\acrodef{PSD}{Power Spectral Density}
\acrodef{PSK}{Phase Shift Keying}
\acrodef{QKD}{Quantum Key Distribution}
\acrodef{CVQKD}{Continuous-Variable \ac{QKD}}
\acrodef{QPSK}{Quadrature Phase-Shift Keying}
\acrodef{SIMO}{Single-Input Multiple-Output}
\acrodef{SNR}{Signal-to-Noise Ratio}
\acrodef{TPCP}{Trace-Preserving Completely-Positive}
\acrodef{wrt}[w.r.t.]{with respect to}
\acrodef{WSS}{Wide Sense Stationary}
\newcommand{\pr}[1]{\left({#1}\right)}
\newcommand{\ket}[1]{|#1\rangle}
\newcommand{\bra}[1]{\langle #1 |}
\newcommand{\braket}[2]{\langle #1 | #2 \rangle}
\newcommand{\id}{\textnormal{id}}
\newcommand{\one}{\mathbf{1}}
\newcommand{\set}[1]{{\left\{#1\right\}}}
\newcommand{\ketbra}[2]{{\ket{#1}\bra{#2}}}
\newcommand{\kb}[1]{{\ket{#1}\bra{#1}}}
\newcommand{\ptr}[2]{\mathrm{tr}_{#1}\pr{#2}}
\newcommand{\mixed}[1]{\rho_{#1}^{\mathrm{unif}}}
\newcommand{\nrm}[1]{{\norm{#1}}}
\acrodef{POVM}{Positive Operator Valued Measurement}
\begin{document}

\title{Toward Undetectable Quantum Key Distribution over Bosonic Channels}
\author{Mehrdad Tahmasbi}
\thanks{Author to whom correspondence should be addressed. Email: mtahmasbi3@gatech.edu} 
\affiliation{Georgia Institute of Technology}

\author{Matthieu R. Bloch}
\thanks{Email: matthieu.bloch@ece.gatech.edu} 
\affiliation{Georgia Institute of Technology}


\begin{abstract}
  We show that covert secret key expansion is possible using a public  authenticated classical channel and a quantum channel largely under control of an adversary, which we precisely define. We also prove a converse result showing  that, under the golden standard of quantum key distribution by which the adversary completely controls the quantum channel, no covert key generation is possible.  We propose a protocol based on pulse-position modulation and multi-level coding that allows one to use traditional quantum key distribution (QKD) protocols while ensuring covertness, in the sense that no statistical test by the adversary can detect the presence of communication better than a random guess. When  run over a bosonic channel,  our protocol can leverage existing discrete-modulated continuous variable protocols. Since existing techniques to bound Eve's information do not directly apply, we develop a new bound that results in positive  throughput for a range of channel parameters.
\end{abstract}

\pacs{Valid PACS appear here}
\maketitle

\section{Introduction}
\label{sec:introduction}

The combination of quantum mechanics and information theory has led to several intriguing applications. In particular, there have been significant advances in {QKD}, which has now been successfully implemented and deployed in the field~\cite{Diamanti2016}. QKD finds its foundations in two pioneering papers~\cite{bennett1984quantum, Ekert1991}, which  discovered that non-classical signaling allows two parties (Alice and Bob) to exploit the laws of quantum mechanics and bound the information leaked to any adversary (Eve); when combined with classical information-theoretic tools, such as information reconciliation and privacy amplification, this observation can lead to protocols for the distillation of secure key bits. The security proofs of QKD have evolved from considering simple attacks, in which Eve could only perform a measurement on each transmitted  signal and send another state to Bob, to accounting for all attacks that could be described in the framework of quantum mechanics, known as \emph{coherent} attacks~\cite{renner2008security}; recent proofs even consider an adversary who tampers with the legitimate users' measurement devices~\cite{Acin2007}.

Although {QKD} ensures the \emph{confidentiality} of the generated keys in an extremely strong sense, Alice and Bob might desire other security features. One such feature that has recently attracted attention is  covertness~\cite{Bash2013,Wang2016b,Bloch2016a}, i.e., the ability to prevent an adversary from distinguishing whether a communication protocol is running or not from its observations.  For memory-less classical and classical-quantum (cq) channels, over which Alice aims at sending a message, a square root law has been established~\cite{Bash2013, Sheikholeslami2016} and states that the optimal number of bits that can be reliably and covertly transmitted scales as the square root of the number of channel uses. This contrasts with the limits of confidential communication, for which a linear scaling is feasible. The main intuition behind the square root law is that the central limit theorem ensures the presence of statistical uncertainty in Eve's observations, on the order of the square root of the number of channel uses, in which the transmitter can hide its signals.

The first attempts at covert QKD~\cite{Arrazola2016, Bash2015a} have ensured covertness with \emph{fully coordinated} protocols, in which information-bearing qubits are only transmitted over a secret random subset of channel uses upon which Alice and Bob secretly agree prior to communication; in the remaining channel uses Alice transmits an ``idle'' state corresponding to no communication. If $n$ denotes the total number of channel uses and $t$ denotes the number of channel uses over which transmission happens, fully coordinated protocols~\cite{Arrazola2016, Bash2015a} require $t=\Theta(\sqrt{n})$ to generate $\Omega(\sqrt{n})$ bits of secret key. Although the processing complexity is identical to that of standard QKD protocols, fully coordinated protocols require Alice and Bob to share $\log \binom{n}{t} = \Theta(\sqrt{n}\log n)$ secret bits prior to communication, so that the number of required key bit asymptotically dominates the number of generated bits. 

To circumvent the impossibility of key expansion with fully coordinated protocol, we have recently proposed~\cite{Tahmasbi2018b} to achieve covertness with an \emph{uncoordinated} protocol based on the use of ``sparse signaling'' for quantum state distribution. If $\alpha_n \eqdef O(n^{-\frac{1}{2}})$ and if $P_X$ denotes the Bernoulli($\alpha_n$) distribution, Alice  generates an i.i.d. sequence $X^n = (X_1, \cdots, X_n)$ according to $P_X^\pn$, which is then modulated by mapping zero to the idle state  and one to another state.
A technical subtlety, however, prevents Alice and Bob from performing classical information reconciliation and privacy amplification to obtain a secret key from their shared quantum states.   While the asymptotic key rate is $ O(n^{-\frac{1}{2}})$ by the square root law, the finite length penalty of privacy amplification is of the order of $\omega(n^{-\frac{1}{2}})$~\cite{Watanabe2013}, which  dominates the asymptotic rate. For a \emph{known adversary's attack}, our uncoordinated protocol circumvents this difficulty and ensures secret-key expansion using a likelihood encoder~\cite{Tahmasbi2018b} but the classical post-processing of the protocol \emph{is much more complex than for typical QKD protocols}.

To reap the benefits of both fully coordinated and uncoordinated protocols and achieve secret key expansion without increasing processing complexity, we develop here a \emph{partially coordinated} protocol inspired by our prior construction of low-complexity codes for covert communication over classical channels with {PPM} and {MLC}~\cite{Kadampot2018}. This approach is more aligned with traditional low-complexity information reconciliation and privacy amplification algorithms and we analyze the covertness and the security under an unknown attack by the adversary. We restrict, however, the adversary's  control of the channel by requiring that a portion of the channel be out of the adversary's control (e.g., the part of channel in Alice's laboratory). We prove that such a requirement is fundamentally necessary to establish any covertness result. We also point out that we were not able to use any standard technique to bound Eve's information. Accordingly, we present a new bound, by which we could in our security analysis achieve positive throughputs for some range of bosonic channel parameters. While our results are slightly disappointing in that this range of useful parameters is limited, they open the way to experimental demonstrations of covert QKD.

\section{Notation}
A system (e.g. $A$) is described by a finite-dimensional Hilbert space (e.g. $\calH_A$). Let $\one_A$ be the identity map on $\calH_A$ and $\mixed{A} \eqdef \frac{\one_A}{\dim \calH_A}$, where $\dim \calH_A$ is the dimension of $\calH_A$. $\calB(\calH_A)$ denotes the set of all bounded linear operators from $\calH_A$ to $\calH_A$, $\calP(\calH_A)$ denotes the set of all positive operators in $\calB(\calH_A)$, and $\calD(\calH_A)$ denotes the set of all density operators on $\calH_A$. For $X\in \calB(\calH_A)$, the trace norm of $X$ is ${\norm{X}}_1 \eqdef \tr{\smash{\sqrt{X^\dagger X}}}$, and $\nu(X)$ denotes the number of \emph{distinct} eigenvalues of $X$. We recall the definition of the von Neumann entropic quantities $H(\rho_A) \eqdef \avgH{A}_{\rho} \eqdef -\tr{\rho_A \log \rho_A}$, $\avgH{A|B}_{\rho} \eqdef \avgH{AB}_{\rho} - \avgH{B}_{\rho}$, and $\avgI{A;B}_{\rho} \eqdef \avgH{A}_{\rho} - \avgH{A|B}_{\rho}$. The fidelity between two density operators $\rho_A$ and $\sigma_A$ is defined as $F(\rho_A, \sigma_A) \eqdef {\norm{\smash{\sqrt{\rho_A} \sqrt{\sigma_A}}}}_1^2$. We further define $C(\rho_A, \sigma_A) \eqdef \sqrt{1- F(\rho_A, \sigma_A)}$, which satisfies the triangle inequality.  A quantum channel $\calN_{A\to B}$ is a linear trace-preserving completely positive map from $\calB(\calH_A)$ to $\calB(\calH_B)$. Let $\id_A$ be the identity channel on $\calB(\calH_A)$. For two states $\rho$ and $\sigma$, we define
\begin{align}
 \chi_2\pr{\rho\|\sigma} \eqdef \begin{cases} \tr{\rho^2\sigma^{-1}}-1 &\text{if }\, \text{supp} (\rho) \subset \text{supp} (\sigma),\\
 \infty&\text{otherwise}.\end{cases}
\end{align}
For a non-empty finite set $\calX$, let $\calH_X$ be a Hilbert space defined by an orthonormal basis $\set{\ket{x}: x\in \calX}$. For a function $f:\calX \to \calY$, we define the channel
\begin{align}
  \calE^f_{X\to Y}:\calB(\calH_X)&\to \calB(\calH_Y)\nonumber\\
  \rho_X &\mapsto \sum_{x\in \calX}  \ketbra{f(x)}{x}\rho_X \ketbra{x}{f(x)}.
\end{align}

%

\begin{figure}[b]
  \centering
  \includegraphics[width = 1 \linewidth]{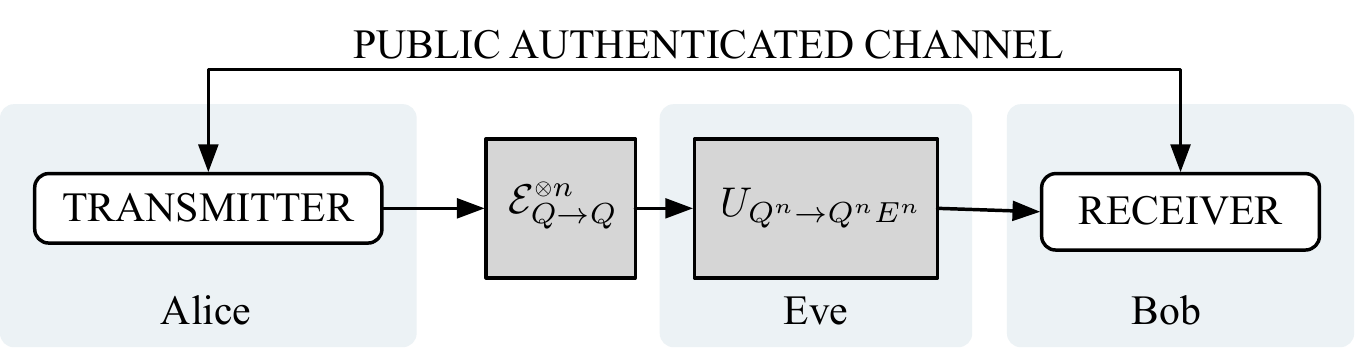}
  \caption{Covert quantum key expansion model}
  \label{fig:model}
\end{figure}

\section{Covert QKD Setup}
\label{sec:problem-formulation}

Alice and Bob aim  at \emph{covertly} expanding a \emph{secret} key in the following manner. Let $R_A$ and $R_B$ be Alice's and Bob's local randomness, respectively, and let $R$ be a secret common randomness. As depicted in Figure~\ref{fig:model}, Alice has a transmitter  in her lab to send quantum states to Bob. At any time instant, the state of the transmitter is described by a density operator on a Hilbert space $\calH_{Q}$. A pure state $\kb{0}$ identifies the ``idle'' state of the transmitter when there is no communication \footnote{One can associate a mixed state to no communication, but in bosonic systems, the natural choice for the idle state is a pure vacuum state.}. Alice prepares a quantum state $\widetilde{\sigma}_{AQ^n} = \ptr{RR_AR_B}{\widetilde{\sigma}_{RR_AR_BAQ^n}}$ and sends $\widetilde{\sigma}_{Q^n}$ to Bob by $n$ uses of her transmitter. The adversary Eve is assumed to receive the state through a known memoryless quantum  channel, which we call \emph{probe},  $\calE_{Q \to Q}$ that is \emph{outside its control}. Eve therefore obtains the output of $\calE_{Q \to Q}^\pn$ for the input $\widetilde{\sigma}_{Q^n}$, which then interacts with an ancilla $E^n$ in Eve's lab before being transmitted to Bob. The whole operation can be described by an isometry $U_{Q^n\to Q^n E^n}$, for which we denote the corresponding quantum channel by $\calU_{Q^n\to Q^n E^n}$. We call this phase  \emph{quantum state distribution}, which results in the joint quantum state
\begin{multline}
\sigma_{AQ^n E^n} \eqdef  \\(\id_{RR_AR_BA} \otimes \calU_{Q^n\to Q^n E^n }\circ \calE_{Q\to Q}^\pn )(\widetilde{\sigma}_{RR_AR_BA Q^n})
\end{multline}
between Alice, Bob, and Eve, respectively. After establishing a shared quantum state, Alice and Bob interactively communicate over an authenticated classical public channel and perform measurements on their available state to generate keys $S_A$ and $S_B$, respectively. We call this phase  \emph{quantum key distillation} and formally describe it by a quantum channel $\calD_{R_AR_BRAQ^n\to CS_AS_B}$, where $C$ denotes all public communication. The final state is then
\begin{multline}
\label{eq:final-state}
  \sigma_{CS_AS_BE^n} \eqdef \\
  (\id_{E^n} \otimes \calD_{R_AR_BRAQ^n\to CS_AS_B})(\sigma_{RR_AR_BAQ^n}).
\end{multline}
Furthermore, we assume that, in the absence of an adversary, Alice and Bob expect to be connected through the ``honest'' channel  $\calN_{Q\to Q}$ \emph{after the probe}. Alice and Bob can also abort the protocol at any time and do not generate secret keys. 
For a particular protocol inducing the final joint state $\sigma_{CS_AS_BE^n}$, we assess the  performance of the protocol with the following three quantities:
\begin{enumerate}
\item probability of error $\P{S_A \neq S_B|\mathrm{not~abort}}$;
\item  information leakage $\nrm{\sigma_{S_AE^nC}-\mixed{S_A}\otimes \sigma_{E^n C}}_1$;
\item covertness $\nrm{\sigma_{CE^n}-\mixed{C}\otimes\rho_{E^n}^0 }_1$, where $\rho_{E^n}^{0} \eqdef \calU_{Q^n \to E^n}(\calE_{Q\to Q}^\pn(\kb{0}^\pn))$; and
\item robustness $\P{\mathrm{abort}}$ in the presence of the honest channel $\calN_{Q\to Q}$.
\end{enumerate}

We highlight here three crucial distinctions between our model and traditional {QKD}.
\begin{enumerate}
\item  As covertness is of no concern in {QKD}, the  idle state of the transmitter is not specified in a {QKD} model. 
\item Unlike {QKD}, in which the quantum channel is in complete control of the adversary, we restrict Eve's observations to result from a known probe $\calE_{Q \to Q}$. We discuss this limit on our result in Section~\ref{sec:probe-role}.
\item Our covertness metric $\nrm{\sigma_{CE^n}-\mixed{C}\otimes\rho_{E^n}^0 }_1$ not only imposes a negligible dependence between public communication and $\sigma_{E^n}$ but also requires that public communication be distributed according to a pre-specified distribution, which we choose as the uniform distribution $\mixed{C}$ for simplicity. These two requirements are critical to ensure that public communication does not help Eve detect the communication.
\end{enumerate}

\section{Role of the Probe}
\label{sec:probe-role}
We establish here a no-go result in the absence of the warden's probe and for a relaxed  secrecy and covertness constraint.
\begin{theorem}
\label{th:prob-role}
Let $\calE_{Q\to Q} = \id_Q$ and define $K\eqdef \log \dim \calH_{S_A}$. Consider a protocol that operates as in Section~\ref{sec:problem-formulation} with $\P{S_A \neq S_B} \leq \epsilon$, $\nrm{\sigma_{S_AC} - \mixed{S_A}\otimes \sigma_C}_1\leq \delta$, and ${\norm{\widetilde{\sigma}_{Q^n} -\kb{0}^\pn}}_1 \leq \mu$. We then have
\begin{multline}
(1-5\sqrt{\mu} -\epsilon - 2\delta)K \leq 2\delta \log \dim \calH_C +  \Hb{\sqrt{\mu}} \\+ \Hb{\epsilon+\sqrt{\mu}} + 2\pr{1+\sqrt{\mu}}\Hb{\frac{\sqrt{\mu}}{1+\sqrt{\mu}}}.
\end{multline}
\end{theorem}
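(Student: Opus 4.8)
The plan is to turn the hypothesis $\calE_{Q\to Q}=\id_Q$ into the statement that, because covertness forces Alice's transmission to be essentially the vacuum \emph{and} Eve now observes the true channel output, Eve can substitute a genuine vacuum into Bob's input without Bob noticing; Bob's key then decouples from Alice's private data except through the public transcript, and secrecy forbids the transcript from carrying the key. Quantitatively I would bound $K=\avgH{S_A}_{\mixed{S_A}}$ from above by a chain of continuity and Fano estimates comparing the true final state to its ``vacuum'' counterpart.

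\emph{Step 1 (decouple the vacuum).} From $\norm[1]{\widetilde\sigma_{Q^n}-\kb{0}^\pn}\leq\mu$ and positivity, $\bra{0}^\pn\widetilde\sigma_{Q^n}\ket{0}^\pn\geq 1-\mu$, so the gentle measurement lemma applied with $\Lambda\eqdef\one\otimes\kb{0}^\pn$ gives a state $\widetilde\tau_{RR_AR_BA}$ with $\norm[1]{\widetilde\sigma_{RR_AR_BAQ^n}-\widetilde\tau_{RR_AR_BA}\otimes\kb{0}^\pn}\leq 2\sqrt\mu$. Because $\calU_{Q^n\to Q^nE^n}$ and the distillation channel $\calD$ are trace preserving, monotonicity of the trace norm yields $\norm[1]{\sigma_{CS_AS_BE^n}-\sigma^0_{CS_AS_BE^n}}\leq 2\sqrt\mu$, where $\sigma^0$ is the final state produced from the \emph{product} input $\widetilde\tau_{RR_AR_BA}\otimes\kb{0}^\pn$.

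\emph{Step 2 (conditional independence --- the crux).} In $\sigma^0$ Bob's received register is a fixed state, decoupled from $RR_AR_BA$; the distillation phase is therefore an interactive classical exchange between two parties whose local data are independent \emph{a priori}. Provided there is no pre-shared secret correlation --- which is exactly the setting of key \emph{generation} --- a cut-and-paste (rectangle) argument for interactive protocols shows that, given the full transcript $C$, Alice's and Bob's outputs are independent, i.e. $\avgI{S_A;S_B|C}_{\sigma^0}=0$. I expect this to be the main obstacle: it requires lifting the classical rectangle property to parties holding disjoint quantum registers and measuring them adaptively, and it is where the no-probe hypothesis and the absence of shared secret randomness are genuinely used.

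\emph{Step 3 (Fano and assembly).} Since trace distance bounds differences of probabilities, $\P{S_A\neq S_B}_{\sigma^0}\leq\epsilon+\sqrt\mu$; with Step 2, $\avgH{S_A|C}_{\sigma^0}=\avgH{S_A|S_BC}_{\sigma^0}\leq\avgH{S_A|S_B}_{\sigma^0}\leq\Hb{\epsilon+\sqrt\mu}+(\epsilon+\sqrt\mu)K$ by Fano. Transferring back to $\sigma$ with the Alicki--Fannes--Winter inequality (parameter $\sqrt\mu$) bounds $\avgH{S_A|C}_\sigma$ from above up to terms $2\sqrt\mu K$ and $(1+\sqrt\mu)\Hb{\frac{\sqrt\mu}{1+\sqrt\mu}}$. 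On the other hand, the secrecy hypothesis, through near-uniformity of $S_A$ and continuity of the mutual information $\avgI{S_A;C}$ with conditioning system $S_A$, gives $\avgH{S_A|C}_\sigma\geq(1-2\delta)K-2\delta\log\dim\calH_C$. Combining the two bounds on $\avgH{S_A|C}_\sigma$ and collecting a second $\sqrt\mu$-continuity estimate (needed to compare the marginal entropies across the two scenarios, which supplies the remaining $\Hb{\sqrt\mu}$ term and the extra $2\sqrt\mu$ in the coefficient) rearranges into $(1-5\sqrt\mu-\epsilon-2\delta)K\leq 2\delta\log\dim\calH_C+\Hb{\sqrt\mu}+\Hb{\epsilon+\sqrt\mu}+2(1+\sqrt\mu)\Hb{\frac{\sqrt\mu}{1+\sqrt\mu}}$, as claimed.
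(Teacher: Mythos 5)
Your skeleton matches the paper's proof: substitute a product-state input close to the real one, establish conditional independence of the keys given the transcript, and finish with Fano, continuity, and secrecy. Your Steps 1 and 3 are sound in substance (the paper handles Step 1 by purifying $\widetilde{\sigma}_{AQ^n}$ and applying Uhlmann's theorem to obtain $\ket{\phi}_{RA}$ with $F(\widetilde{\sigma}_{RAQ^n},\phi_{RA}\otimes \kb{0}^\pn)\geq 1-\mu$, rather than by gentle measurement, but both deliver the needed closeness of the two final states up to constants). The genuine gap is your Step 2, exactly where you predict trouble: you assert $\avgI{S_A;S_B|C}_{\sigma^0}=0$ by appeal to a classical cut-and-paste/rectangle property and leave the ``quantum lifting'' unproven, yet this single assertion is what carries the entire theorem. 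The paper closes precisely this hole with Lemma~\ref{lm:pub-com}, a quantum version of a lemma of Ahlswede and Csisz\'ar: for any bipartite state $\rho_{AB}$ and any local channel $\calE_{A\to AF}$ generating a public message $F$, one has $\avgI{A;B}_{\rho}\geq \avgI{A;B|F}_{\rho'}$, proven in a few lines from subadditivity of entropy and data processing for relative entropy. Applied round by round to the interactive distillation phase, it yields $\avgI{S_A;S_B|C}_{\tau}\leq \avgI{A;Q^n}_{\tau}\leq \avgI{A;Q^n}_{\widetilde{\tau}}=0$, the last step because $\widetilde{\tau}_{AQ^n}=\widetilde{\tau}_{A}\otimes\widetilde{\tau}_{Q^n}$; note the paper needs only this monotonicity, not exact conditional independence.

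Your route is nonetheless completable: for an LOCC distillation phase acting on a product input, induction over rounds shows that a local instrument followed by a classical announcement maps a state that is product conditioned on the past transcript to one that is product conditioned on the extended transcript, whence $\avgI{S_A;S_B|C}_{\sigma^0}=0$ exactly. Either that induction or the paper's lemma must actually be written out; as submitted, the crux is asserted rather than proven. Two smaller points. First, your remark that the absence of pre-shared secret correlation is genuinely used is apt: if the shared randomness $R$ in the model could be consumed as a key, the statement would fail trivially via $S_A=S_B=R$, and the paper's proof implicitly makes the same restriction you do. Second, to land on the exact constants of the statement (the coefficient $5\sqrt{\mu}$, the $\Hb{\sqrt{\mu}}$ term, the factor $2\pr{1+\sqrt{\mu}}$) you would need to follow the paper's particular accounting, which bounds $\avgH{S_A}$ by Fannes' inequality and $\avgI{S_A;C}$ by mutual-information continuity separately; your single Alicki--Fannes--Winter application to $\avgH{S_A|C}$, together with the $2\sqrt{\mu}$ from gentle measurement, produces a right-hand side of a different (in fact somewhat tighter) shape, so your final ``rearranges into'' step is not literal.
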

\begin{proof}
See Appendix~\ref{sec:no-go}.
\end{proof}
Consequently, if $\epsilon, \delta, \mu\to 0$, $K$ vanishes, as well.
Theorem~\ref{th:prob-role} therefore shows that giving the \emph{complete} control of the channel to the adversary is too stringent to establish covertness. A probe is therefore necessary and could by created with some part of the channel that is protected from the adversary, for example the portion of an optical fiber that lies inside Alice's lab.

\section{Protocol Description}
\label{sec:description-protocol}

We first provide a high level description of the role of PPM and MLC in our protocol. The principle of PPM is to split the whole transmission block into smaller sub-blocks and to transmit exactly one non-idle state in a position chosen uniformly at random in each sub-block. The number of sub-blocks and the size of each sub-block should both be $O(\sqrt{n})$ to achieve covertness~\cite{Bloch2017b}. 
The principle of MLC is to further split the randomness used to specify the position of the non-idle state into two parts: one part with a fixed size independent of $n$, generated locally by Alice and used for key generation, and another part of size growing with $n$, generated secretly and jointly by Alice and Bob and used for mimicking the uniform distribution via quantum channel resolvability \cite[Chapter 9.4]{hayashi2006quantum}. This splitting allows Alice and Bob to \emph{partially} coordinate  without paying the penalty incurred by full coordination. The use of {MLC} converts the problem of covert {QKD} into a traditional {QKD} problem over an effective block-length scaling as $O(\sqrt{n})$, for which low-complexity processing is possible.

\begin{figure}[t]
  \centering
  \includegraphics[width = 1 \linewidth]{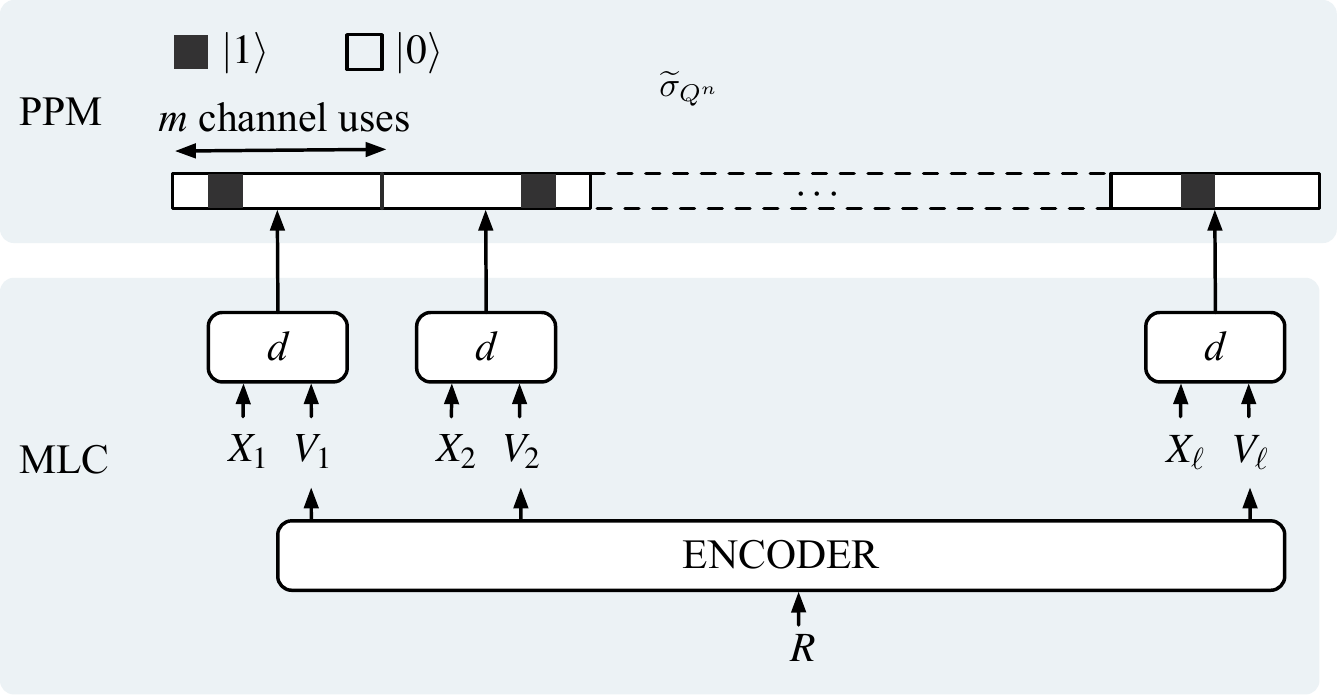}
  \caption{Covert quantum state distribution through {PPM} and {MLC}}
    \label{fig:scheme}

\end{figure} 

We now elaborate on the details of the partially coordinated protocol.  As depicted in Fig.~\ref{fig:scheme}, the $n$ channel uses are partitioned into $\ell$ consecutive sub-blocks of length $m$ so that $n \eqdef\ell m$.  Fix a non-idle state $\ket{\nidle}$ for the transmitter such that
\begin{align}
  \braket{\nidle}{0} &\neq 0\\
  \mathrm{supp}\pr{\calE_{Q\to Q}(\kb{1})} &\subseteq \mathrm{supp}\pr{\calE_{Q\to Q}(\kb{0})}
\end{align} We define the $z^{\mathrm{th}}$ PPM state of length $m$, $\puppm{m}{z}$ as 
\begin{align}
 {\ket{0}}^{\proddist z-1} \otimes \ket{\nidle} \otimes {\ket{0}}^{\proddist m-z},
\end{align} 
a product of $\ket{\idle}$ and $\ket{\nidle}$ with a single non-idle state in the $z^{\mathrm{th}}$ position. Writing $m\eqdef\nnu\nnv$, Alice generates  $\ell$  PPM states of length $m$ by choosing the position of the non-idle state in the $i^{\mathrm{th}}$ state as
\begin{align}
d(X_i, V_i) \eqdef (X_i-1)\nnv + V_i,
\end{align}
where $X^\ell  = (X_1, \cdots, X_\ell) \in \intseq{1}{\nnu}^\ell$ and $V^\ell = (V_1, \cdots, V_\ell) \in \intseq{1}{\nnv}^\ell$ are randomly generated sequences. Let $\rho_{Q^n}^{x^\ell, v^\ell}$ be the corresponding density operator when $X^\ell = x^\ell$ and $V^\ell = v^\ell$, i.e.,
\begin{align}
\rho_{Q^n}^{x^\ell, v^\ell} \eqdef \otimes_{i=1}^\ell  \kb{\mathrm{PPM}, d(x_i, v_i)}
\end{align}
The crux of the protocol is \emph{to generate the sequences $X^\ell$ and $V^\ell$ using different mechanisms:} $X^\ell$ is generated locally by Alice {i.i.d.} according to the uniform distribution over $\intseq{1}{m_x}$ while $V^\ell$ is generated jointly by Alice and Bob by sampling codewords uniformly at random from a codebook of size $h$ described as follows. Let $\calF$ be a regular two-universal family of hash functions from $\intseq{1}{\nnv}^\ell \to \calZ$ where $\calZ = \intseq{1}{\frac{\nnv^\ell}{h}}$. Bob samples  $f\in \calF$ and $z\in \calZ$ uniformly at random and transmits them over the public channel. The codebook consists of the codewords in $f^{-1}(z)$ and will be denoted through the function
\begin{align}
  g:\intseq{1}{h}\to \intseq{1}{\nnv}^\ell: R\mapsto V^\ell=g(R).
\end{align}
The choice of $X^\ell$ uniformly at random defines an effective cq channel from $v^\ell$ to the state at the output of the probe, formally described by 
\begin{align}
 v^\ell \mapsto \frac{1}{\nnu^\ell} \sum_{x^\ell} \calE_{Q\to Q}^{\pn}\pr{\rho^{x^\ell, v^\ell}_{Q^n}}.
\end{align} 
By sampling $R$ uniformly at random in $\intseq{1}{h}$ and using $g(R)$ at the input of the effective cq channel, Eve's received state is 
\begin{align}
  \sigma_{Q^n}\eqdef \frac{1}{\nnu^\ell}\frac{1}{h}\sum_{x^\ell, r}\calE_{Q\to Q}^{\pn}\pr{\rho_{Q^n}^{x^\ell, g(r)}}.\label{eq:state_protocol}
\end{align}
If Alice and Bob secretly share $R$ prior to the transmission, Bob can discard $m - \nnu$ of his sub-systems in each sub-block, for which he knows that the state $\ket{0}$ is sent. We shall later account for the partial coordination through $R$ by subtracting $\log h$ from the number of generated key bits. For each sub-block, Alice therefore obtains the classical state $X_i$ while Bob obtains $\nnu$ received states. We denote the whole state shared between Alice and Bob in $\ell$ sub-blocks by $\sigma_{X^\ell (Q^\nnu)^\ell}$, which is  $\tau_{XQ^{\nnu}}^{\proddist \ell}$ in the absence of the adversary for some $\tau_{XQ^{\nnu}}$ independent of $n$.

  The rest of the protocol is similar to a  traditional  QKD protocol applied to $\sigma_{X^\ell (Q^\nnu)^\ell}$ with the additional constraint $\nrm{\sigma_{CE^n}-\mixed{C}\otimes\rho_{E^n}^0 }_1\leq \delta$, which requires the public communication to  be uniformly distributed and  independent of Eve's observation during the quantum communication phase. The  three main steps of this phase are parameter estimation, information reconciliation, and privacy amplification. Let $\ell \eqdef \ell_1 + \ell_2$ and decompose $X^\ell(Q^\nnu)^\ell$ into two disjoint parts  $X^{\ell_1}(Q^\nnu)^{\ell_1}$ and $X^{\ell_2}(Q^\nnu)^{\ell_2}$, used for parameter estimation and secret key distillation, respectively. For simplicity, we do not detail the classical algorithm for information reconciliation and take for granted the existence of a protocol $\calI_{X^{\ell_2}(Q^\nnu)^{\ell_2} \to X^{\ell_2} \widehat{X}^{\ell_2} C_{\mathrm{IR}}}$ where $\widehat{X}^{\ell_2}$ denotes Bob's estimate of $X^{\ell_2}$ and $C_{\mathrm{IR}}$ is the public communication that takes place during the information reconciliation protocol. Let $\sigma_{X^{\ell_2}, \widehat{X}^{\ell_2} C_{IR}E^nC'	} \eqdef ( \calI \otimes \id_{E^nC'}) (\sigma_{X^{\ell_2} (Q^{\nnu})^{\ell_2} E^nC'})$ where $\sigma_{E^n}$ is the adversary's observation from the quantum communication and $C'$ is the public communication in the quantum state distribution phase. We \emph{assume} that
  \begin{align}
  \P{X^{\ell_2} \neq \widehat{X}^{\ell_2}} \leq \epsilon_{\mathrm{IR}},\\
    \P{\mathrm{abort} | \mathrm{~honest~channel}} \leq \epsilon_{\mathrm{IR}},\\
    \nrm{\sigma_{C_{\mathrm{IR}}E^nC'} - \mixed{C_{\mathrm{IR}} } \otimes \sigma_{E^nC'}}_1\leq \epsilon_{\mathrm{IR}}. \label{eq:add-inf-rec}
  \end{align}
  More justification of the existence of good reconciliation protocols can be found in \cite{Mateo2013} and references therein. Furthermore, using the ideas in \cite{Chou2016}, one can ensure the additional constraint in \eqref{eq:add-inf-rec}.
 The final step is to perform privacy application to establish a secure key. To this end, Alice and Bob require a bound on $\mathbb{H}_{\min}^{\delta_{\mathrm{PA}}}(X^{\ell_2}|E^n)$ for some information leakage threshold $\delta_{\mathrm{PA}}$, which we establish in Theorem~\ref{th:sec-ana}.

\section{Protocol Analysis}
\label{sec:protocol-analysis}
\subsection{Covertness}
\begin{theorem}
\label{th:cov-ana}
For any $\lambda_2 > 0$, with $$\log h =  \frac{\ell}{\nnu}\chi_2(\rho_E^1\|\rho_E^0)  + \sqrt{\ell} \pr{2\log \nnv +3} \sqrt{{\log \frac{4}{\lambda_2} + 1}},
$$ we have
\begin{align}
\nrm{\sigma_{E^n C} - \rho_{E^n}^0 \otimes \mixed{C}}_1 \leq \lambda_1 + \lambda_2 + \epsilon_{\mathrm{IR}} + \delta_{\mathrm{PA}},
\end{align}
where 
\begin{align}
\label{eq:lambda1}
\lambda_1 &= \sqrt{ \frac{\ell}{2m}\chi_2(\rho_E^1\| \rho_E^0)}.
 \end{align}
\end{theorem}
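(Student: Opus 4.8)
The plan is to peel the statement down, by data processing and the triangle inequality, to a pure channel-resolvability estimate at the output of the probe, and then to prove that estimate by a second-moment ($\chi_2$) computation that exploits the pairwise-independence of the two-universal codebook. Concretely, I would first dispose of Eve's isometry: since $\rho_{E^n}^0 = \calU_{Q^n\to E^n}(\calE_{Q\to Q}^\pn(\kb{0}^\pn))$ and the state Eve holds is the image of the probe output $\widetilde{\sigma}_{\widetilde Q^n}$ under the fixed channel $\Phi\eqdef\ptr{Q^n}{\calU(\cdot)}$, monotonicity of the trace distance under $\Phi\otimes\id_{C}$ lets me replace $\sigma_{E^n C}$ by the probe-output state and $\rho_{E^n}^0$ by $\rho^0\eqdef\calE_{Q\to Q}^\pn(\kb{0}^\pn)$; here $\rho_E^0,\rho_E^1$ denote the single-letter states Eve receives, $\calE_{Q\to Q}(\kb{0})$ and $\calE_{Q\to Q}(\kb{\phi})$, whose nested supports make $\chi_2(\rho_E^1\|\rho_E^0)$ finite. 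Next, writing $C=(C',C_{\mathrm{IR}},C_{\mathrm{PA}})$ with $\mixed{C}=\mixed{C'}\otimes\mixed{C_{\mathrm{IR}}}\otimes\mixed{C_{\mathrm{PA}}}$, I would peel off $C_{\mathrm{IR}}$ and $C_{\mathrm{PA}}$ by the triangle inequality using the assumed uniformity bound \eqref{eq:add-inf-rec} (contributing $\epsilon_{\mathrm{IR}}$) and the analogous privacy-amplification bound (contributing $\delta_{\mathrm{PA}}$). Because Bob draws $C'=(f,z)$ uniformly and independently, the residual quantity is exactly the block-diagonal average $\E[f,z]{\nrm{\sigma^{(f,z)}-\rho^0}_1}$, where $\sigma^{(f,z)}=\frac1h\sum_{v^\ell\in f^{-1}(z)}\sigma^{v^\ell}$ and $\sigma^{v^\ell}\eqdef\frac1{\nnu^\ell}\sum_{x^\ell}\calE_{Q\to Q}^\pn(\rho_{Q^n}^{x^\ell,v^\ell})$.

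I would then split this, via the triangle inequality, into a bias term $\nrm{\bar\sigma-\rho^0}_1$ and a fluctuation term $\nrm{\sigma^{(f,z)}-\bar\sigma}_1$, where $\bar\sigma\eqdef\frac1{\nnv^\ell}\sum_{v^\ell}\sigma^{v^\ell}$. Both rest on a single-block computation together with the multiplicativity of $1+\chi_2$ across the $\ell$ i.i.d. sub-blocks. The key fact is that the $\chi_2$-divergence from a uniform mixture of $k$ PPM states (at $k$ distinct positions) to the all-idle reference $\rho_E^{0,\proddist m}$ equals $\frac1k\chi_2(\rho_E^1\|\rho_E^0)$: expanding the square, each of the $k$ diagonal terms contributes $\tr{(\rho_E^1)^2(\rho_E^0)^{-1}}=1+\chi_2(\rho_E^1\|\rho_E^0)$ at its special position, while every off-diagonal term contributes $1$ because at each distinguished position an idle factor cancels the reference inverse and leaves $\tr{\rho_E^1}=1$, giving $\frac1{k^2}\pr{k^2+k\,\chi_2(\rho_E^1\|\rho_E^0)}$.

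Applying this with $k=m=\nnu\nnv$ yields $\chi_2(\bar\sigma\|\rho^0)=\pr{1+\tfrac1m\chi_2(\rho_E^1\|\rho_E^0)}^\ell-1$, and a standard relation between trace distance and the $\chi_2$-divergence (routed if convenient through $C(\cdot,\cdot)$, which obeys the triangle inequality) delivers $\nrm{\bar\sigma-\rho^0}_1\le\lambda_1$. Applying it instead with $k=\nnu$ gives $\chi_2(\sigma^{v^\ell}\|\rho^0)=\pr{1+\tfrac1{\nnu}\chi_2(\rho_E^1\|\rho_E^0)}^\ell-1$, which is notably independent of $v^\ell$ since the $\nnu$ positions $\{(x-1)\nnv+v_i\}$ reachable by $X_i$ are distinct for every $v_i$.

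For the fluctuation term I would invoke the resolvability property of the two-universal family: regularity makes each codeword marginally uniform, so $\E[f,z]{\sigma^{(f,z)}}=\bar\sigma$, and two-universality bounds the joint-inclusion probability of any pair of distinct codewords by $h^2/\nnv^{2\ell}$; consequently the off-diagonal part of $\E[f,z]{\tr{(\sigma^{(f,z)}-\bar\sigma)(\rho^0)^{-1}(\sigma^{(f,z)}-\bar\sigma)}}$ is cancelled by $\bar\sigma$ and only the diagonal survives, leaving a second moment of at most $\frac1h\pr{1+\tfrac1{\nnu}\chi_2(\rho_E^1\|\rho_E^0)}^\ell$, whose smallness is governed by the leading term $\frac{\ell}{\nnu}\chi_2(\rho_E^1\|\rho_E^0)$ of $\log h$. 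The main obstacle, and the reason $\log h$ carries the extra $\sqrt{\ell}\,(2\log\nnv+3)\sqrt{\log\frac4{\lambda_2}+1}$ rather than the $O(\log\frac1{\lambda_2})$ a bare second-moment/Markov step would suggest, is that the required soft-covering bound must control the tail of $\chi_2(\sigma^{(f,z)}\|\rho^0)$ (a fixed good codebook compatible with the later secrecy analysis), not merely its mean. This I would handle by a one-shot covering argument in which the per-sub-block information density is truncated and the covering threshold is set by a Chernoff/Hoeffding bound on the sum of the $\ell$ independent per-sub-block densities, each of range at most $\approx 2\log\nnv+3$; this produces the dispersion-type $\sqrt{\ell}$ correction and pins down the stated $\log h$ needed to reach $\lambda_2$. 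Reassembling the bias, fluctuation, $\epsilon_{\mathrm{IR}}$ and $\delta_{\mathrm{PA}}$ contributions by the triangle inequality then yields the claimed bound.
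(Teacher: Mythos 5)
Your proposal is correct in substance and, at its core, takes a genuinely different route from the paper. The outer reductions coincide: both arguments discard Eve's isometry by monotonicity of the trace norm, split $C$ into the codebook-selection part $C'$ and the rest, absorb the latter into $\epsilon_{\mathrm{IR}}+\delta_{\mathrm{PA}}$ via \eqref{eq:add-inf-rec} and the leftover hash lemma, and compare what remains to the ideal PPM average (your $\overline{\sigma}$ is the paper's $\rho_{Q^n}^{\mathrm{PPM}}$ of \eqref{eq:rho-ppm-def}). Writing $\chi_2$ for $\chi_2(\rho_E^1\|\rho_E^0)$, your single-block computation (diagonal terms give $1+\chi_2$, off-diagonal terms give $1$, hence divergence $\frac{1}{k}\chi_2$ for a uniform mixture over $k$ distinct pulse positions) is precisely the content of the bound the paper imports from \cite[Eq.~(B144)]{tahmasbi2018framework}. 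The real divergence is the fluctuation term: the paper treats it as channel resolvability via Lemma~\ref{lm:res} (a consequence of the privacy-amplification theorem for two-universal hashing) followed by the smooth-min-entropy AEP \cite[Cor.~3.3.7]{renner2008security} --- which is exactly where the $\sqrt{\ell}\pr{2\log \nnv + 3}\sqrt{\log\frac{4}{\lambda_2}+1}$ term originates --- and then the Holevo-information bound $\avgI{V;Q^m}_{\rho}\leq \frac{1}{\nnu}\chi_2$. You instead run a pairwise-independence second-moment argument: regularity gives $\mathbb{E}_{f,z}\big[\sigma^{(f,z)}\big]=\overline{\sigma}$, two-universality controls pair-inclusion probabilities, and the weighted Cauchy--Schwarz inequality $\nrm{A}_1\leq\sqrt{\tr{A(\rho^0)^{-1}A}}$ (meaningful thanks to the support assumption on $\calE_{Q\to Q}(\kb{\nidle})$) converts the second moment $\frac{1}{h}\pr{1+\chi_2/\nnu}^{\ell}$ into a trace-norm bound. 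This is sound, more elementary, avoids the dimension dependence of \cite{hayashi2006quantum} just as the paper's hashing route does, and is in fact sharper: it needs only $\log h \gtrsim \frac{\ell}{\nnu}\chi_2 + 2\log\frac{1}{\lambda_2}$, so the stated, larger $\log h$ suffices with room to spare. One small repair: for the bias term, a direct trace-distance-versus-$\chi_2$ inequality yields $\sqrt{\pr{1+\chi_2/m}^{\ell}-1}\geq \sqrt{2}\,\lambda_1$, which overshoots \eqref{eq:lambda1}; to land exactly on $\lambda_1$ you should route through $\D{\overline{\sigma}}{\rho^0}\leq \ell\log\pr{1+\chi_2/m}\leq\frac{\ell}{m}\chi_2$ and Pinsker, as the paper does.

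The one conceptual error is your final paragraph. The covertness metric requires no tail bound and no fixed codebook: $C'=(f,z)$ is public communication that Eve observes, so $\sigma_{Q^nC'}$ is block-diagonal in $(f,z)$ and $\nrm{\sigma_{Q^nC'}-\rho_{Q^n}^{\mathrm{PPM}}\otimes\mixed{C'}}_1$ \emph{equals} the average over $(f,z)$ of $\nrm{\sigma^{(f,z)}-\rho_{Q^n}^{\mathrm{PPM}}}_1$. Indeed, the paper's Lemma~\ref{lm:res} is itself used as an average-over-$(f,z)$ statement, no derandomization step appears anywhere in the covertness proof, and the secrecy analysis (Theorem~\ref{th:sec-ana}) does not need one either, since it concerns $X^{\ell}$ and is oblivious to the codebook for $V^{\ell}$. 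Your truncation-plus-Chernoff covering argument therefore addresses a requirement the theorem does not impose, and it is also the least substantiated part of the proposal: the claim that the per-sub-block densities have range about $2\log\nnv+3$ is borrowed from the role of $\mathbb{H}_{\max}(V)=\log\nnv$ in Renner's AEP, which is a statement about smooth min-entropy, not about a $\chi_2$-type information density. The fix is simply to delete that step: your own second-moment bound already closes the proof, and the $\sqrt{\ell}$ term in the prescribed $\log h$ is then harmless slack --- an artifact of the paper's AEP route rather than a necessity of the problem.
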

\begin{proof}
Let $C = (C', C'')$ where $C'$ denotes the public communication required to choose the codebook, and $C''$ denotes the remaining public communication. By the triangle inequality, we have
\begin{align}
&\nrm{\sigma_{E^n C} - \rho_{E^n}^0 \otimes \mixed{C}}_1\\
& \leq \nrm{\sigma_{E^n C} - \sigma_{E^nC'} \otimes \mixed{C''}}_1 \nonumber\\
&\phantom{=========}+ \nrm{\sigma_{E^nC'} \otimes \mixed{C''} - \rho_{E^n}^0 \otimes \mixed{C}}_1 \\
&=  \nrm{\sigma_{E^n C} - \sigma_{E^nC'} \otimes \mixed{C''}}_1 + \nrm{\sigma_{E^nC'}  - \rho_{E^n}^0 \otimes \mixed{C'}}_1
\end{align}
By our discussion at the end of Section~\ref{sec:description-protocol} and leftover hash lemma~\cite{renner2008security}, we have $ \nrm{\sigma_{E^n C} - \sigma_{E^nC'} \otimes \mixed{C''}}_1 \leq \epsilon_{\mathrm{IR}} + \delta_{\mathrm{PA}}$.
We now consider  the second term $\nrm{\sigma_{E^nC'}  - \rho_{E^n}^0\otimes \mixed{C'} }_1$. Note first that by the monotonicity of the trace norm,
\begin{align}
&\nrm{\sigma_{E^n}  - \rho_{E^n}^0 }_1\\
&= \nrm{\calU_{Q^n \to E^n}(\widetilde{\sigma}_{Q^nC'}) - \calU_{Q^n\to E^n}(\pr{\rho_Q^0}^\pn\otimes \mixed{C'})}_1\\
& \leq \nrm{\widetilde{\sigma}_{Q^nC'} - \pr{\rho_Q^0}^\pn\otimes \mixed{C'}}_1.
\end{align}
We upper-bound the above term in two steps. Introducing an intermediate state
\begin{align}
\label{eq:rho-ppm-def}
\rho_{Q^n}^{\mathrm{PPM}} \eqdef \frac{1}{\nnu^\ell\nnv^\ell}\sum_{x^\ell, v^\ell} \calE_{Q\to Q}^\pn(\rho_{Q^n}^{x^\ell, v^\ell}),
\end{align}
which is the average state at the output of the probe when $v^\ell$ is chosen uniformly at random from $\intseq{1}{\nnu}^\ell$, we have
\begin{align}
&\nrm{\rho_{Q^n}^{\mathrm{PPM}} \otimes \mixed{C'}  - \pr{\rho_Q^0}^\pn \otimes \mixed{C'}}\\
&\phantom{===========}= \nrm{\rho_{Q^n}^{\mathrm{PPM}}  - \pr{\rho_Q^0}^\pn}\\
&\phantom{===========}\stackrel{(a)}{\leq} \sqrt{\frac{1}{2}\D{\rho_{Q^n}^{\mathrm{PPM}}}{\pr{\rho_Q^0}^\pn}}\\
&\phantom{===========}\stackrel{(b)}{\leq} \sqrt{\frac{\ell}{2m} \chi_2(\rho_Q^1\|\rho_Q^0)}
\end{align}
where $(a)$ follows from Pinsker's inequality, and $(b)$ follows from \cite[Eq. (B144)]{tahmasbi2018framework}\footnote{In the classical setting, the authors of~\cite{Bloch2016b} showed the upper-bound with a factor of $1/2$ on the right hand side. While we conjecture that an extension of such upper-bound to the quantum setting is possible, we could only prove the upper-bound without the factor $1/2$}. Therefore, establishing covertness amounts to proving that the state $\sigma_{Q^nC'}$ generated by the protocol is nearly identical to $\rho_{Q^n}^{\mathrm{PPM}} \otimes \mixed{C'}$. 
This problem is known as quantum channel resolvability, and the minimum number of bits $\log h$ required is approximately equal to the Holevo information~\cite[Lemma 9.2]{hayashi2006quantum}. Recall that $\calF$ is a regular two-universal family of hash functions from $\intseq{1}{\nnv}^\ell$ to $\calZ \eqdef \intseq{1}{\nnv^\ell/h}$. 

Let us define
\begin{align}
\widetilde{\rho}^v_{Q^m} &\eqdef \frac{1}{\nnu} \sum_x \kb{\mathrm{PPM}, d(x, v)}\\
\widetilde{\rho}^{v^\ell}_{Q^n} &\eqdef \widetilde{\rho}^{v_1} \otimes \cdots \otimes \widetilde{\rho}^{v_\ell}\\
\widetilde{\rho}_{VQ^m} &\eqdef \frac{1}{\nnv} \sum_v \kb{v}_V \otimes \widetilde{\rho}^v_{Q^m}\\
\rho_{VQ^m} &\eqdef \calE_{Q\to Q}^{\proddist m}(\widetilde{\rho}_{VQ^m})
\end{align}

By Lemma~\ref{lm:res} in Appendix~\ref{sec:quantum-res},
\begin{align}
&\nrm{\sigma_{Q^nC'} - \rho_{Q^n}^{\mathrm{PPM}} \otimes \mixed{C'}}_1\\
&= \frac{1}{|\calF|} \frac{1}{\card{\calZ}} \sum_{f\in \calF, z\in \calZ}{\norm{\frac{1}{h} \sum_{v^\ell\in f^{-1}(z)} \calE_{Q\to Q}^\pn\pr{\rho_{Q^n}^{v^\ell}} - \rho_{Q^n}^{\mathrm{PPM}}}}_1\\
& \leq \lambda_2,
\end{align}
provided that 
\begin{multline}
\log h \geq \log \card{\calV^\ell} - \mathbb{H}^{\frac{\lambda}{4}}_{\min}(V^\ell|Q^n)_{\rho^{\proddist \ell}} + 2\log \frac{2}{\lambda_2},
 \end{multline}
and $\card{\calV}^\ell$ is divisible by $h$ \footnote{For any $h'$, we can choose $h$ such that $\card{\calV}h'\geq h\geq h'$ and $\card{\calV}^\ell$ is divisible by $h$; hence, this condition adds at most $\log \card{\calV} = O(\log n)$ of penalty on $\log h$.}. 
Applying~\cite[Corollary 3.3.7]{renner2008security}, we simplify the condition on $\log h$ by noting that
\begin{multline}
\log \card{\calV^\ell} - \mathbb{H}^{\frac{\lambda}{4}}_{\min}(V^\ell|Q^{n})_{\rho^{\proddist \ell}} \\
\begin{split}
&{\leq} \log  \card{\calV^\ell} - \ell\left(\avgH{V|Q}_{\rho} \right.\\\nonumber\\
&\left. \phantom{======}-\pr{2\mathbb{H}_{\max}(V)_{\rho} +3}\sqrt{\frac{\log \frac{4}{\lambda_2} + 1}{\ell}} \right)\\
&\stackrel{(a)}{=} \ell  \avgI{V; Q^m}_{\rho}  + \sqrt{\ell} \pr{2\log \nnv +3} \sqrt{{\log \frac{4}{\lambda_2} + 1}},
\end{split}
\end{multline}
where  $(a)$ follows since $\tau_V$ is the mixed state. We also further upper-bound $\avgI{V; Q^m}_{\rho}$ by
\begin{align}
\avgI{V; Q^m}_{\rho}  
&= \D{\rho_{VQ^m}}{\rho_V \otimes \rho_{Q^m}}\\
&\leq \D{\rho_{VQ^m}}{\rho_V \otimes \pr{\rho_Q^0}^{\proddist m}} \\
&=\frac{1}{\nnv}  \sum_{v\in \calV}\D{\rho_{Q^m}^v}{\pr{\rho_Q^0}^{\proddist m}}\\
&\stackrel{(a)}{=}  \D{\rho_{Q^m}^1}{\pr{\rho_Q^0}^{\proddist m}}\\
&= \D{\calE_{Q\to Q}^{\proddist m_x}\pr{\rho^{\mathrm{PPM}}_{Q^{\nnu}}}}{\pr{\rho_Q^0}^{\proddist \nnu}}\\
&\leq \frac{1}{\nnu} \chi_2(\rho_Q^1\|\rho_Q^0),
\end{align}
where $(a)$ follows from the symmetry in the definition of $\rho^v_{Q^m}$.
This concludes the proof.

\end{proof}

\subsection{Security}
The objective of this section is to lower bound the smooth min-entropy of Alice's data $X^\ell$ given Eve's observations. We first remind that, by our discussion in Section~\ref{sec:description-protocol}, we can assume that Alice prepares $\widetilde{\sigma}_{XQ^\nnu}^{\proddist \ell}$ where 
\begin{align}
\widetilde{\sigma}_{XQ^\nnu} &\eqdef \frac{1}{\nnu} \sum_{x=1}^{\nnu} \kb{x}_X \otimes \widetilde{\sigma}_{Q^\nnu}^x,\\
\widetilde{\sigma}_{Q^\nnu}^x &\eqdef {\kb{\idle}}^{\proddist x - 1} \otimes \kb{\nidle} \otimes {\kb{\idle}}^{\proddist \nnu - x},
\end{align}
and sends $\widetilde{\sigma}_{Q^\nnu}^{\proddist \ell}$ over the quantum channel to Bob. We assume that Eve applies the same unitary $U_{Q^{\nnu} \to Q^{\nnu}  E^m }$ on each PPM symbol. Generalizing the security proof to a general attack could follow from the same techniques as in \cite{Franz2012, Leverrier2013, Renner2009}. We now introduce some notation, which is summarized in Fig.~\ref{fig:states}.
\begin{figure}[t]
  \centering
  \includegraphics[width = \linewidth]{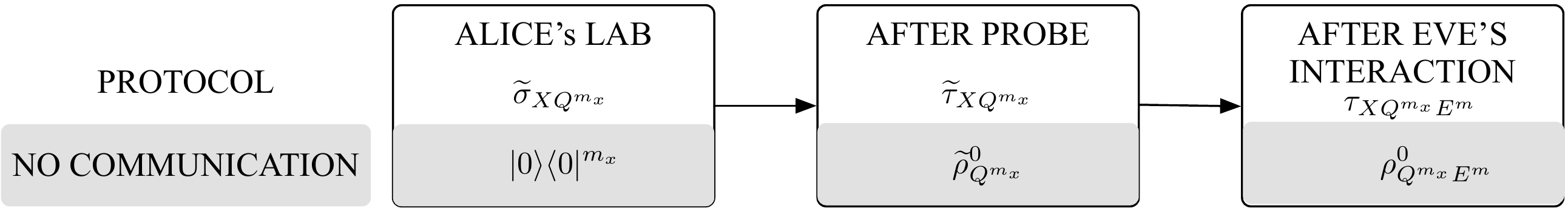}
  \caption{Notations for secrecy analysis}
    \label{fig:states}

\end{figure} 
 Let us define
 \begin{align}
 \widetilde{\tau}_{Q^\nnu}^x &\eqdef \calE_{Q\to Q}^\nnu\pr{\widetilde{\sigma}_{Q^\nnu}^x},\\
 \tau_{Q^\nnu E^m}^x &\eqdef U_{Q^\nnu \to Q^\nnu\to E^m} \widetilde{\tau}_{Q^\nnu}^xU_{Q^\nnu \to Q^\nnu\to E^m}^\dagger,\\
\tau_{X Q^\nnu E^m} &\eqdef \frac{1}{\nnu} \sum_{x=1}^\nnu \kb{x}_X \otimes  \tau_{Q^\nnu E^m}^x,\\
\sigma_{X^\ell Q^{\nnu \ell} E^n} &\eqdef \tau_{X{B}E^m}^{\proddist \ell}.
\end{align}
We also define
\begin{align}
\widetilde{\rho}_{Q^\nnu}^0 &\eqdef  \pr{ \calE_{Q\to Q}\pr{\kb{0}}}^{\proddist \nnu},\\
{\rho}_{Q^\nnu E ^m}^0 &\eqdef U_{Q^\nnu \to Q^\nnu\to E^m} \widetilde{\rho}_{Q^\nnu}^0U_{Q^\nnu \to Q^\nnu\to E^m}^\dagger.
\end{align}

\begin{theorem}
\label{th:sec-ana}
We have
\begin{multline}
\label{eq:bound-eve-inf}
\mathbb{H}_{\min}^\delta(X^\ell|E^n)_{\sigma} \geq \log \nnu -  \D{\rho_Q^1}{\rho_Q^0} \\+ \frac{1}{\nnu}\sum_{x} \log\pr{1 - \eta^x}- \pr{2\log \nnu + 3}\sqrt{\frac{\log \frac{1}{\delta}+ 1}{\ell}},
\end{multline}
for all unitaries $U$ and for all $\eta^x$ such that
\begin{multline}
\label{eq:cond-th3}
F(\tau_{Q^\nnu}^x, \rho_{Q^\nnu}^0) \leq \aleph(\lambda^x, F(\kb{\idle}, \kb{\nidle}))   \\- 2\sqrt{1-F(\kb{\idle}, \kb{\nidle})}\delta -(\delta)^2
\end{multline}
where
\begin{align}
\aleph(x, y) &\eqdef 1 - \frac{2\sqrt{1-y}x + x^2}{y}- 2\sqrt{\frac{2\sqrt{1-y}x + x^2}{y}}x-x^2,\\
\lambda^x &\eqdef 2\delta_0 + \sqrt{\eta^x + 4\sqrt{\eta^x}\delta_1 + 4(\delta_1)^2},\\
\delta_1 &\eqdef C(\kb{\nidle}, \rho_Q^1),\\
\delta_0 &\eqdef C(\kb{0}, \rho_Q^0),\\
\delta &\eqdef \delta_0 + \delta_1.
\end{align}
\end{theorem}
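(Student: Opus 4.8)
The plan is to reduce the $\ell$-fold smooth min-entropy to a single-letter quantity via the asymptotic equipartition property and then to control that quantity with a fidelity-based information--disturbance argument, exploiting the product structure $\sigma_{X^\ell Q^{\nnu\ell} E^n} = \tau_{XQ^\nnu E^m}^{\proddist\ell}$. Since $\sigma$ is a tensor power and $X$ is uniform on $\intseq{1}{\nnu}$, I would invoke the smooth-min-entropy AEP \cite[Corollary 3.3.7]{renner2008security} exactly as in the proof of Theorem~\ref{th:cov-ana}, which gives
\begin{align}
\mathbb{H}_{\min}^\delta(X^\ell|E^n)_\sigma \geq \ell\,\avgH{X|E^m}_\tau - \sqrt{\ell}\,(2\log\nnu + 3)\sqrt{\log\frac{1}{\delta} + 1},\nonumber
\end{align}
using $\mathbb{H}_{\max}(X)_\tau = \log\nnu$; read per symbol this is precisely the residual term $(2\log\nnu+3)\sqrt{(\log\frac{1}{\delta}+1)/\ell}$. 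The task therefore reduces to a single-letter lower bound on $\avgH{X|E^m}_\tau = \log\nnu - \avgI{X;E^m}_\tau$, i.e.\ to an upper bound on Eve's Holevo information $\avgI{X;E^m}_\tau$.

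\textbf{Leading leakage term.} The Holevo quantity equals the relative entropy to the optimal reference and is hence no larger than the relative entropy to the idle reference $\rho_{E^m}^0$,
\begin{align}
\avgI{X;E^m}_\tau \leq \frac{1}{\nnu}\sum_x \D{\tau_{E^m}^x}{\rho_{E^m}^0}.\nonumber
\end{align}
Applying data processing to the channel ``apply the isometry $U$, then trace out $Q^\nnu$'' replaces each summand by $\D{\widetilde{\tau}_{Q^\nnu}^x}{\widetilde{\rho}_{Q^\nnu}^0}$; the product form $\widetilde{\tau}_{Q^\nnu}^x = (\rho_Q^0)^{\proddist x-1}\otimes\rho_Q^1\otimes(\rho_Q^0)^{\proddist \nnu-x}$ together with additivity of relative entropy collapses this to $\D{\rho_Q^1}{\rho_Q^0}$, finite by the support hypothesis $\mathrm{supp}(\rho_Q^1)\subseteq\mathrm{supp}(\rho_Q^0)$ (support nesting survives the partial trace). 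This supplies the leading terms $\log\nnu - \D{\rho_Q^1}{\rho_Q^0}$.

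\textbf{Attack-dependent $\eta^x$ refinement.} The fidelity apparatus sharpens the reference and pays for the smoothing of radius $\delta = \delta_0 + \delta_1$. Since $U$ is an isometry, $F(\tau_{Q^\nnu E^m}^x,\rho_{Q^\nnu E^m}^0) = F(\widetilde{\tau}_{Q^\nnu}^x,\widetilde{\rho}_{Q^\nnu}^0)$, and monotonicity of fidelity under the partial trace ties Eve's marginal fidelity $F(\tau_{E^m}^x,\rho_{E^m}^0)$ to Bob's marginal fidelity $F(\tau_{Q^\nnu}^x,\rho_{Q^\nnu}^0)$. I would then chain the metric $C(\cdot,\cdot)$, which obeys the triangle inequality, through the two pure-to-probe legs $\kb{\idle}\to\rho_Q^0$ and $\kb{\nidle}\to\rho_Q^1$ of sine-distances $\delta_0$ and $\delta_1$, and through the overlap $F(\kb{\idle},\kb{\nidle})$; this is what assembles the composite $\lambda^x$ and $\aleph(\lambda^x,F(\kb{\idle},\kb{\nidle}))$. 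Here $\eta^x$ measures the residual weight by which Eve's conditional state still fails to coincide with $\rho_{E^m}^0$, and condition~\eqref{eq:cond-th3} is exactly the certificate that an $\eta^x$-close idealized state sits inside the purified-distance ball of radius $\delta$. Converting this fidelity closeness into a conditional-entropy statement contributes the term $\frac{1}{\nnu}\sum_x\log(1-\eta^x)$.

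\textbf{Main obstacle.} The hard part is the third step: the quantitative information--disturbance tradeoff bounding Eve's residual distinguishability $\eta^x$ by Bob's observed fidelity $F(\tau_{Q^\nnu}^x,\rho_{Q^\nnu}^0)$ \emph{uniformly over all isometries} $U$, while simultaneously threading the purified-distance smoothing of radius $\delta$ through the probe. The mixedness of the post-probe states $\widetilde{\tau}_{Q^\nnu}^x$ blocks a direct pure-state complementarity argument, so the chain of fidelity triangle inequalities encoded in $\aleph$ and $\lambda^x$ must be built one leg at a time with the constants tracked carefully, and one must ensure the entropy conversion stays controlled as \eqref{eq:cond-th3} approaches equality.
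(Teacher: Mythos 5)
Your reduction steps coincide with the paper's: the AEP application of \cite[Corollary 3.3.7]{renner2008security}, the identity $\avgH{X|E^m}_\tau = \log \nnu - \avgI{X;E^m}_\tau$, and the substitution of the reference state giving $\avgI{X;E^m}_\tau \leq \frac{1}{\nnu}\sum_x \D{\tau_{E^m}^x}{\rho_{E^m}^0}$ are exactly how the paper begins. The genuine gap is in your third step, which is the entire content of the theorem. Plain data processing through ``apply $U$, trace out $Q^\nnu$'' gives only $\D{\tau_{E^m}^x}{\rho_{E^m}^0} \leq \D{\rho_Q^1}{\rho_Q^0}$; the term $\log\pr{1-\eta^x}$ is a \emph{strict improvement over data processing}, so it cannot be layered on top of that bound by a separate fidelity-chaining argument --- it must come from a strengthened data-processing inequality. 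The paper supplies precisely this in Theorem~\ref{th:data-proc-ref}: by the recoverability theorem \cite[Theorem 12.1.1]{wilde2013quantum} there exists a recovery channel $\calR$ with $(\calR\circ\calE)(\rho_A^0)=\rho_A^0$ and $\D{\rho_A^1}{\rho_A^0}-\D{\calE(\rho_A^1)}{\calE(\rho_A^0)} \geq -\log F\pr{\rho_A^1,(\calR\circ\calE)(\rho_A^1)}$; then, since $\calN$ (Bob's channel) is a degraded version of the complementary channel of $\calR\circ\calE$, the information--disturbance lemma for nearly-pure inputs (Lemma~\ref{lm:fid-to-l1}, built on Uhlmann's theorem via Lemma~\ref{lm:pure-dist} --- this is where $\aleph$, $\lambda^x$, $\delta_0$, $\delta_1$ actually originate) shows that hypothesis \eqref{eq:cond-th3}, i.e.\ small fidelity at Bob's output, forces $F\pr{\rho_A^1,(\calR\circ\calE)(\rho_A^1)}\leq 1-\eta^x$, which is what produces the $\log\pr{1-\eta^x}$ term. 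Your proposal contains neither a recoverability statement nor the degradation argument, and you yourself flag this step as the ``main obstacle'' without resolving it; as written, the proposal establishes only the weaker bound missing $\frac{1}{\nnu}\sum_x \log\pr{1-\eta^x}$.

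Two further inaccuracies in the sketch point the wrong way. First, $\eta^x$ is not ``the residual weight by which Eve's conditional state fails to coincide with $\rho_{E^m}^0$,'' and condition \eqref{eq:cond-th3} is not a certificate that an idealized state lies in the smoothing ball of radius $\delta$; the smoothing parameter plays no role in generating $\log\pr{1-\eta^x}$ (it enters only through the AEP penalty). Rather, $1-\eta^x$ upper-bounds the fidelity with which the \emph{input} state $\widetilde{\tau}_{Q^\nnu}^x$ can be recovered from Eve's share. Second, ``monotonicity of fidelity under the partial trace'' relates each marginal fidelity to the joint fidelity, but it does not tie Eve's marginal to Bob's marginal; the complementarity between the two marginals is exactly what Lemma~\ref{lm:pure-dist} establishes, and it crucially requires the inputs to be close to $\pr{\text{pure state}}\otimes\pr{\text{fixed mixed state}}$, which is where $\delta_0 = C(\kb{0},\rho_Q^0)$ and $\delta_1 = C(\kb{\nidle},\rho_Q^1)$ enter the constants.
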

\begin{remark}
The right hand side of \eqref{eq:bound-eve-inf} only depends on quantities that are either specified by the protocol and the probe, or could be calculated from Alice's and Bob's observations.
\end{remark}
\begin{remark}
We explain here the difficulty in obtaining such bounds.   Note first that, as detailed in \cite{Tahmasbi2018b}, reverse reconciliation does not lead to a positive covert throughput unless Eve's and Bob's observations are independent when $\ket{0}$ is sent. This is unfortunately not  the case when the channel is a beam-splitter.  To the best of our knowledge, there exist two standard methods to bound Eve's information for continuous variable QKD protocols. The first method leverages the optimality of Gaussian attack, which results in a sub-optimal bound on Eve's information for discrete-variable protocols. Since Alice's measurement is not Gaussian (in the entanglement-based version), it is not straightforward to calculate the bound for  forward reconciliation protocols. The second method exploits entropic uncertainty relations, which requires finding an entanglement-based version with two different measurement at Alice. We could not find such version of our specific quantum state distribution.
\end{remark}
\begin{remark}
Note that in the absence of the adversary $\avgI{X; Q^{\nnu}}_{\sigma} = \D{\calN(\rho_Q^1)}{\calN(\rho_Q^0)} + O(1/\nnu)$ \cite{Bloch2016b}. Excluding finite-length effects, we achieve positive covert throughput when,
\begin{align}
\D{\rho_Q^1}{\rho_Q^0} -  \D{\calN(\rho_Q^1)}{\calN(\rho_Q^0)} \leq \frac{1}{\nnu} \sum_x \log (1 - \eta_x).
\end{align}
This inequality holds when $\eta_x> 0$ and $\calN$ is close to the noiseless channel.
\end{remark}
We now state a general upper bound for the relative entropy between the output of the complementary channel for two fixed states.
\begin{theorem}
\label{th:data-proc-ref}
Let $A$ and $B$ be two possibly infinite dimensional  quantum systems such that system $A$ is a composition of two sub-systems $A'$ and $A''$. Let $\rho_A^0$ and $\rho_A^1$ be in $\calD(\calH_A)$ such that for two pure states $\ket{\phi^0}_{A'}$ and $\ket{\phi^1}_{A'}$ in $\calH_{A'}$ and a mixed state $\nu_{A''}$ in $\calD(\calH_{A''})$, we have $C(\phi^x_{A'} \otimes \nu_{A''}, \rho^x_A)\leq \delta_x$.   Let $\calN:\calD(\calH_A) \to \calD(\calH_B)$ be a quantum channel with a complementary channel $\calE:\calD(\calH_A) \to \calD(\calH_E)$. Suppose that $\eta >0$ satisfies
\begin{multline}
\label{eq:eta-assump}
F(\calN(\rho^1_A), \calN(\rho^0_A)) \leq \aleph(\lambda, F(\phi^1_{A'}, \phi^0_{A'}))  \\ - 2\sqrt{1-F(\phi^1_{A'}, \phi^0_{A'})}\delta -\delta^2
\end{multline}
where $\lambda \eqdef 2\delta_0 + \sqrt{\eta + 4\sqrt{\eta}\delta_1 + 4\delta_1^2}$, $\delta \eqdef \delta_0 + \delta_1$.

 We then have
\begin{align}
\D{\calE(\rho_A^1) }{\calE(\rho_A^0)} \leq \D{\rho_A^1}{\rho_A^0} + \log\pr{1-\eta}.
\end{align}
\end{theorem}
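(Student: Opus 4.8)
The plan is to dilate the channel and reduce the claim to a statement about the two marginals of a single bipartite state, then to import the structural hypotheses at the level of fidelity (equivalently the $C$-distance), and finally to convert a \emph{complementarity} between Bob's and Eve's outputs into the additive relative-entropy correction $\log(1-\eta)$. First I would fix a Stinespring isometry $V:\calH_A\to\calH_B\otimes\calH_E$ for which $\calN(\cdot)=\ptr{E}{V\cdot V^\dagger}$ and $\calE(\cdot)=\ptr{B}{V\cdot V^\dagger}$, and set $\omega^x\eqdef V\rho_A^x V^\dagger$. Isometric invariance of the relative entropy gives $\D{\rho_A^1}{\rho_A^0}=\D{\omega^1}{\omega^0}$, while $\calN(\rho_A^x)=\omega^x_B$ and $\calE(\rho_A^x)=\omega^x_E$. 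The target inequality is then equivalent to a lower bound on the data-processing gap $\D{\omega^1}{\omega^0}-\D{\omega^1_E}{\omega^0_E}\geq\log\frac{1}{1-\eta}$ incurred when $B$ is discarded.

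Because relative entropy is not continuous, I cannot replace $\rho_A^x$ by the idealized states $\xi^x\eqdef\ketbra{\phi^x}{\phi^x}_{A'}\otimes\nu_{A''}$ inside the entropies; instead I would transport only the \emph{fidelity} hypothesis to the idealized states. Using $C(\xi^x,\rho_A^x)\leq\delta_x$, the triangle inequality for $C$, and its monotonicity under $\calN$ (and under $V$), I would relate $F(\calN(\rho_A^1),\calN(\rho_A^0))$ to $F(\calN(\xi^1),\calN(\xi^0))$ up to additive $\delta$-terms. This is exactly where the constants in the hypothesis originate: completing the square shows $\eta+4\sqrt{\eta}\,\delta_1+4\delta_1^2=(\sqrt{\eta}+2\delta_1)^2$, so that $\lambda=\sqrt{\eta}+2\delta$ with $\delta=\delta_0+\delta_1$, and the subtracted slack $-2\sqrt{1-F(\phi^1,\phi^0)}\,\delta-\delta^2$ is precisely the perturbation budget produced by these triangle-inequality steps. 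Likewise $\aleph$ is a fidelity in disguise, $\aleph(x,y)=1-\pr{\sqrt{\frac{2\sqrt{1-y}\,x+x^2}{y}}+x}^2$, which signals that the right-hand side of the hypothesis is assembled from a chain of $C$-distance estimates.

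The core is the idealized complementarity. I would purify $\nu_{A''}$ by a reference $A'''$, so that $\ket{\phi^x}_{A'}\otimes\ket{\nu}_{A''A'''}$ purifies $\xi^x$ with mutual overlap $\braket{\phi^0}{\phi^1}$, i.e.\ squared overlap $F(\phi^1,\phi^0)=y$. Applying $V\otimes\id_{A'''}$ yields \emph{pure} states $\Omega^x$ on $B\otimes(EA''')$ with the same overlap, whose $B$-marginal is $\calN(\xi^x)$ and whose $EA'''$-marginal purifies $\calN(\xi^x)$. For two pure bipartite states of fixed overlap there is a rigid tradeoff between the distinguishability of the two marginals: the smallness of $F(\calN(\xi^1),\calN(\xi^0))$ forced by the hypothesis caps how well the input can be reconstructed from the environment, and $\aleph(\lambda,y)$ is exactly the threshold packaging this tradeoff. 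I would then convert this fidelity statement into the entropic conclusion through the strengthened data-processing inequality for the complementary channel, $\D{\omega^1}{\omega^0}-\D{\omega^1_E}{\omega^0_E}\geq-\log F(\rho_A^1,(\calR\circ\calE)(\rho_A^1))$ for the (rotated) Petz map $\calR$ that recovers $\rho_A^0$, together with the elementary bound $\D{\cdot}{\cdot}\geq-\log F(\cdot,\cdot)$; the complementarity caps the recovery fidelity by $1-\eta$, which delivers $\D{\calE(\rho_A^1)}{\calE(\rho_A^0)}\leq\D{\rho_A^1}{\rho_A^0}+\log(1-\eta)$.

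The main obstacle is this core complementarity together with the discontinuity of relative entropy. Quantifying, for pure bipartite states of a given overlap $y$, exactly how a small Bob-side fidelity forces a small environment-recovery fidelity — and doing so with the explicit constants that assemble into $\aleph$ — is the delicate analytic step; equally delicate is ensuring that every passage from the true states $\rho_A^x$ to the idealized $\xi^x$ is carried out at the level of fidelities and $C$-distances, never inside a relative entropy, since otherwise continuity fails and the bound collapses. Tracking the triangle-inequality budget so that it closes to precisely $\lambda=\sqrt{\eta}+2\delta$ and the slack $-2\sqrt{1-F(\phi^1,\phi^0)}\,\delta-\delta^2$ is the bookkeeping that keeps the final bound tight enough to yield positive throughput.
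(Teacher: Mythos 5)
Your plan follows essentially the same route as the paper's proof: it applies Theorem~\ref{th:rec} (the strengthened data-processing/recoverability bound, with a Petz-type map $\calR$ that exactly recovers $\rho_A^0$) to the complementary channel $\calE$, and then caps the recovery fidelity $F(\rho_A^1,(\calR\circ\calE)(\rho_A^1))$ at $1-\eta$ via an information--disturbance argument on the purified idealized states $\phi^x_{A'}\otimes\nu_{A''}$, which is exactly what the paper does through Lemma~\ref{lm:fid-to-l1} and Lemma~\ref{lm:pure-dist}, invoking hypothesis~\eqref{eq:eta-assump} in contrapositive form. The step you flag as ``delicate'' is handled in the paper by observing that $\calN$ is a degraded version of the complementary channel of $\calR\circ\calE$ and applying the fidelity bound there, and your algebraic identifications of $\lambda$ and $\aleph$ are consistent with that computation.
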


\begin{proof}
See Appendix~\ref{sec:data-proc-ref}.
\end{proof}

\begin{proof}[Proof of Theorem~\ref{th:sec-ana}]

 By \cite[Corollary 3.3.7]{renner2008security}, we have
\begin{align}
\mathbb{H}_{\min}^\epsilon\pr{X^\ell | E^n}_{\sigma} 
&\geq \avgH{X|E}_{\tau} - \pr{2\mathbb{H}_{\max}\pr{X}_\tau + 3}\sqrt{\frac{\log \frac{1}{\epsilon}+ 1}{\ell}}\\
&=  \avgH{X|E^m}_{\tau} - \pr{2\log \nnu + 3}\sqrt{\frac{\log \frac{1}{\epsilon}+ 1}{\ell}}.
\end{align}
Furthermore,
\begin{align}
\avgH{X|E^m}_{\tau} = \avgH{X}_\tau  - \avgI{X;E^m}_\tau = \log \nnu -  \avgI{X;E^m}_\tau.
\end{align}
Note now that
\begin{align}
 \avgI{X;E^m}_\tau 
 &= \D{\tau_{XE^m}}{\tau_X \otimes \tau_{E^m}}\\
 &\leq \D{\tau_{XE^m}}{\tau_X \otimes \rho_{E^m}^0}\\
 &= \frac{1}{\nnu}\sum_{x=1}^{\nnu} \D{\tau_{E^m}^x}{\rho_{E^m}^0}.
\end{align}
Since $\eta_x$ satisfies the condition in \eqref{eq:eta-assump} by \eqref{eq:cond-th3}, We can apply Theorem~\ref{th:data-proc-ref} to obtain
\begin{align}
\D{\tau_{E^m}^x}{\rho_{E^m}^0} \leq \D{\widetilde{\tau}^x_{Q^\nnu}}{\widetilde{\rho}^0_{Q^\nnu}} + \log\pr{1 - \eta_x}.
\end{align}
Combining the above inequalities, we obtain the result.
\end{proof}

\subsection{Example}
We present here an experimental setup over which our proposed scheme could be executed. As illustrated in Fig.~\ref{fig:exp}, Alice's transmitter is a laser  whose output is a single-mode bosonic system. The idle state is $\ket{0}$ and we choose a \emph{coherent} state $\ket{\alpha}$ as the non-idle state. The probe and the honest channel are both  beam-splitters with transmissivity $\tau_E$ and $\tau_N$, respectively, and excess noise $\overline{n}_E$ and $\overline{n}_N$, respectively. In Fig.~\ref{fig:example}, we plot the number of bits per PPM symbol versus $\tau_N$ for $\tau_E = 0.9994$, $\alpha = 0.6$, $\overline{n}_E = 11$, and $\overline{n}_N = 0.01$. For these parameters, we also have $\chi_2(\calE(\kb{0})\| \calE(\kb{\alpha})) = 59881934$, which controls the covertness through Eq.~\eqref{eq:lambda1}.

Although the range of channel parameters highlighted is narrow and the efficiency is very low, this example shows the possibility of covert QKD in settings not envisioned earlier . One can certainly improve the performance of the protocol by developing tighter bound for Eve's information, which we leave out for future investigations.
\begin{figure}[t]
  \centering
  \includegraphics[width = \linewidth]{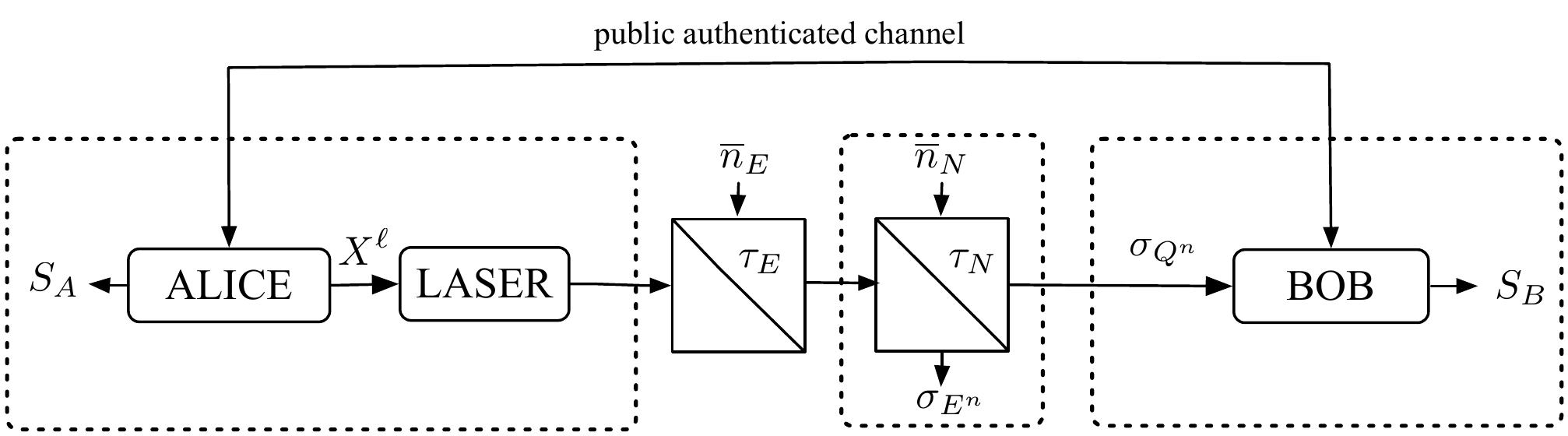}
  \caption{Experimental setup for our protocol.}
    \label{fig:exp}

\end{figure} 
\begin{figure}[t]
  \centering
  \includegraphics[width = \linewidth]{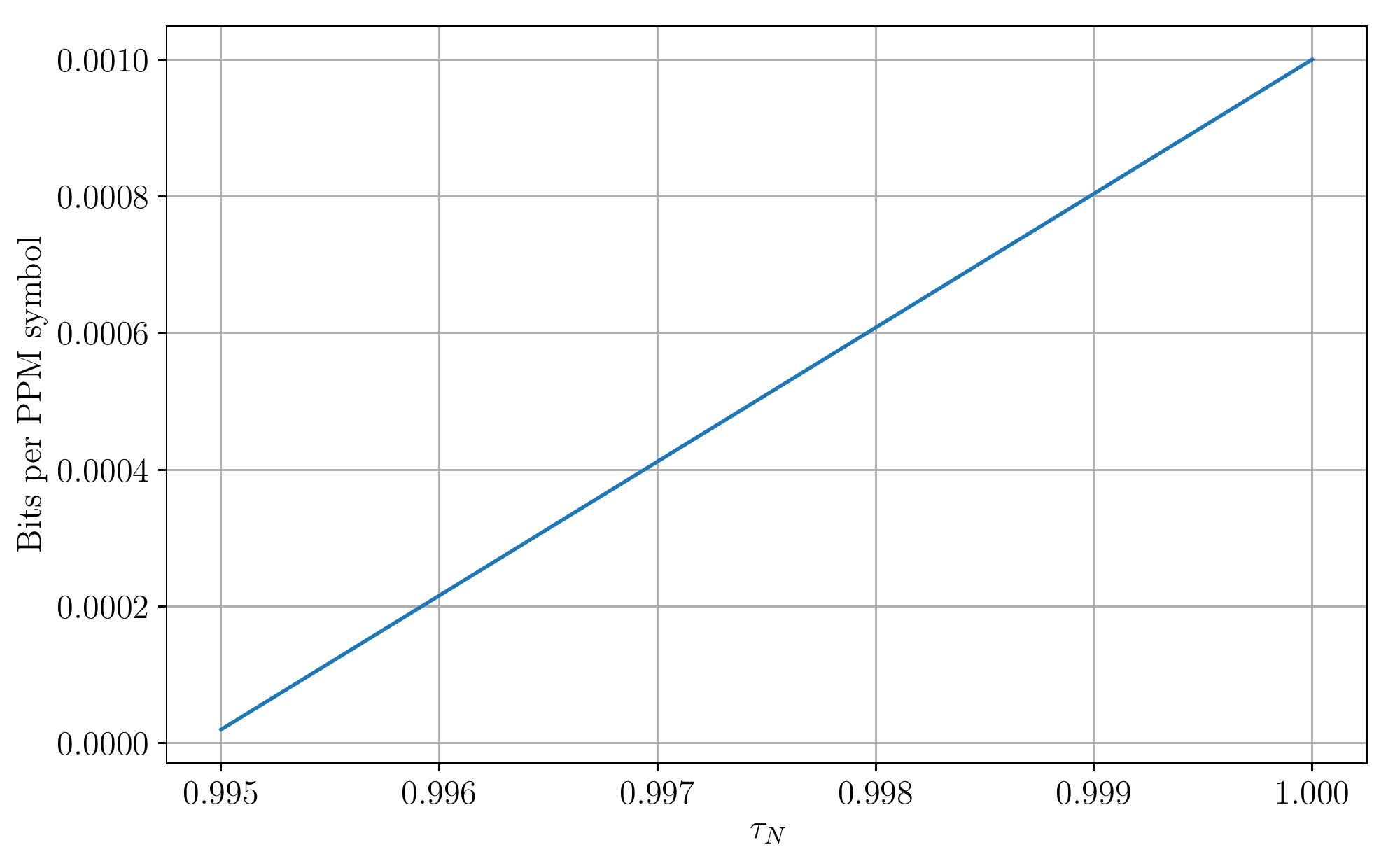}
  \caption{Achievable number of key bits per PPM symbol.}
    \label{fig:example}

\end{figure} 
%


 \section*{acknowledgement}
This work is supported by NSF under Award 1910859.
\appendix
\section{Proof of Theorem~\ref{th:prob-role}}
\label{sec:no-go}
 We first prove a quantum counterpart of \cite[Lemma 2.2]{Ahlswede1993}.
\begin{lemma}
\label{lm:pub-com}
Let $\rho_{AB}$ be a bipartite state and $\calE_{A\to A F}$ be a quantum channel. We then have
\begin{align}
\avgI{A;B}_\rho \geq \avgI{A;B|F}_{\rho'},
\end{align}
where $\rho'_{ABF} \eqdef (\calE_{A\to AF}\otimes \id_B)(\rho_{AB})$.
\end{lemma}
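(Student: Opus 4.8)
The plan is to reduce the statement to two textbook facts about quantum mutual information—the chain rule and the data-processing inequality (equivalently, strong subadditivity)—which together make this the quantum analogue of the classical lemma. The key structural observation is that $\rho'_{ABF} = (\calE_{A\to AF}\otimes \id_B)(\rho_{AB})$ is produced from $\rho_{AB}$ by acting with the channel $\calE_{A\to AF}$ on the $A$ system alone, while leaving $B$ untouched. Hence the composite system $AF$ should be regarded as the \emph{output} of a channel whose input is $A$, and $B$ is simply carried along by $\id_B$.

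First I would invoke the data-processing inequality. Since mutual information cannot increase when a channel is applied to one party, viewing $\calE_{A\to AF}$ as a single channel with output system $AF$ gives
\begin{align}
\avgI{A;B}_\rho \geq \avgI{AF;B}_{\rho'}.
\end{align}
Next I would apply the chain rule for quantum mutual information to the composite output $AF$,
\begin{align}
\avgI{AF;B}_{\rho'} = \avgI{F;B}_{\rho'} + \avgI{A;B|F}_{\rho'},
\end{align}
and discard the non-negative term $\avgI{F;B}_{\rho'}\geq 0$ to obtain $\avgI{AF;B}_{\rho'} \geq \avgI{A;B|F}_{\rho'}$. Chaining the two displays yields $\avgI{A;B}_\rho \geq \avgI{A;B|F}_{\rho'}$, as claimed.

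There is no genuinely hard step here, since both ingredients are standard consequences of strong subadditivity, which I would assume freely. The only point requiring care is the bookkeeping of systems: one must treat $\calE_{A\to AF}$ as a channel with output $AF$ so that data processing bounds $\avgI{AF;B}$ rather than $\avgI{A;B}$ directly, and one must not conflate $\rho'_{AB} = \tr_F(\rho'_{ABF})$ with the original $\rho_{AB}$—this distinction is irrelevant to the argument precisely because the proof passes through $\avgI{AF;B}_{\rho'}$ rather than through the reduced state on $AB$.
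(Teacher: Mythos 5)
Your proof is correct and follows essentially the same route as the paper: both arguments pass through the intermediate quantity $\avgI{AF;B}_{\rho'}$, bounding it below by $\avgI{A;B|F}_{\rho'}$ (your chain-rule-plus-nonnegativity step is the same fact as the paper's use of $\avgH{B|F}_{\rho'}\leq \avgH{B}_{\rho'}$) and above by $\avgI{A;B}_{\rho}$ via data processing (which the paper spells out at the level of relative entropy, $\D{\rho'_{ABF}}{\rho'_{AF}\otimes\rho'_B}\leq\D{\rho_{AB}}{\rho_A\otimes\rho_B}$, while you cite the mutual-information version as a black box). The only difference is presentational.
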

\begin{proof}

We have
\begin{align}
\avgI{A;B|F}_{\rho'} 
&= \avgH{B|F}_{\rho'} - \avgH{B|FA}_{\rho'}\\
&\stackrel{(a)}{\leq} \avgH{B}_{\rho'} - \avgH{B|FA}_{\rho'}\\
&= \avgI{AF;B}_{\rho'}\\
&= \D{\rho'_{ABF}}{ \rho'_{AF} \otimes \rho'}\\
&\stackrel{(b)}{\leq} \D{\rho_{AB}}{\rho_{A} \otimes \rho_B} = \avgI{A;B}_{\rho},
\end{align}
where $(a)$ follows from the sub-additivity of the von Neumann entropy, and $(b)$ follows from the data processing inequality.
\end{proof}
\begin{proof}[Proof of Theorem~\ref{th:prob-role}]
Let $\widetilde{\sigma}_{AQ^n}$ be the state initially prepared by Alice such that ${\norm{\widetilde{\sigma}_{Q^n} - \kb{0}^\pn}}_1\leq \mu$. We then have
\begin{align}
F(\widetilde{\sigma}_{Q^n}, \kb{0}^\pn) \geq 1- \mu.
\end{align} 
Let $\widetilde{\sigma}_{RAQ^n}$ be a purification of $\widetilde{\sigma}_{AQ^n}$. By Uhlmann's theorem, there exists a unit vector $\ket{\phi}_{RA}$ such that
\begin{align}
F(\widetilde{\sigma}_{RAQ^n}, \phi_{RA}\otimes \kb{0}^\pn) \geq 1-\mu.
\end{align}
Let $\widetilde{\tau}_{AQ^n} \eqdef \ptr{R}{\kb{\phi}_{RA}}\otimes \kb{0}^\pn$ and $\tau_{CS_AS_BE^n}$ be the output of the same protocol if Alice initially prepares $\widetilde{\tau}_{AQ^n} $ instead of $\widetilde{\sigma}_{AQ^n} $. By monotonicity of the fidelity, we have $F(\widetilde{\tau}_{AQ^n}, \widehat{\sigma}_{AQ^n})\geq 1-\mu$, and therefore, ${\norm{\widetilde{\tau}_{AQ^n}  - \widetilde{\sigma}_{AQ^n} }}_1\leq \sqrt{\mu}$. By the data processing inequality, we also have ${\norm{\tau_{CS_AS_BE^n} - \sigma_{CS_AS_BE^n}}}_1\leq \sqrt{\mu}$. This implies that $\P{S_A \neq S_B}_\tau \leq \epsilon + \sqrt{\mu}$. By \cite[Exercise 11.10.2]{wilde2013quantum}, we also have
\begin{align}
&\avgI{S_A;C}_\tau \\
&\leq \avgI{S_A;C}_\sigma +  3\sqrt{\mu} K + 2(1+\sqrt{\mu}) \Hb{\frac{\sqrt{\mu}}{1+\sqrt{\mu}}}\\
&\stackrel{(a)}{\leq} \delta\pr{K + \log \dim \calH_C} +  3\sqrt{\mu} K\nonumber\\
&\phantom{===========} + 2(1+\sqrt{\mu}) \Hb{\frac{\sqrt{\mu}}{1+\sqrt{\mu}}},
\end{align}
where $(a)$ follows from
\begin{multline}
\D{\rho_{S_AC}}{\mixed{S_A}\otimes \rho_C}\\ \leq \nrm{\rho_{S_AC}- \mixed{S_A}\otimes \rho_C}_1 \pr{K + \log \dim \calH_C}.
\end{multline}
By  Fannnes's inequality,
\begin{align}
\avgH{S_A}_\sigma \leq \avgH{S_A}_\tau + \sqrt{\mu}K + \Hb{\sqrt{\mu}}.
\end{align}
Note that
\begin{align}
K
&= \avgH{S_A}_\sigma + \D{\sigma_{S_A}}{\mixed{S_A}} \\
&\leq  \avgH{S_A}_\sigma  + \delta\pr{K + \log \dim \calH_C}\\
&\leq \avgH{S_A}_\tau +  \sqrt{\mu}K + \Hb{\sqrt{\mu}} + \delta\pr{K + \log \dim \calH_C}. \label{eq:converse1}
\end{align}
We furthermore have
\begin{align}
\avgH{S_A}_\tau &= \avgH{S_A|C}_\tau + \avgI{S_A;C}_\tau\\
&\leq  \avgH{S_A|C}_\tau  + \delta\pr{K + \log \dim \calH_C}\nonumber\\
&\phantom{=} +  3\sqrt{\mu}K + 2(1+\sqrt{\mu}) \Hb{\frac{\sqrt{\mu}}{1+\sqrt{\mu}}}. \label{eq:converse2}
\end{align}
Using  Fano's inequality, we obtain
\begin{align}
&\avgH{S_A|C}_\tau \\
&\leq  \avgI{S_A;S_B|C}_\tau +  \Hb{\epsilon + \sqrt{\mu}} + \pr{\epsilon + \sqrt{\mu}}K\\
&\stackrel{(a)}{\leq}  \avgI{A;Q^n}_\tau +  \Hb{\epsilon + \sqrt{\mu}} + \pr{\epsilon + \sqrt{\mu}}K\\
&\stackrel{(b)}{\leq} \avgI{A;Q^n}_{\widetilde{\tau}} +   \Hb{\epsilon + \sqrt{\mu}} + \pr{\epsilon + \sqrt{\mu}}K\\
&\stackrel{(c)}{=} \Hb{\epsilon + \sqrt{\mu}} + \pr{\epsilon + \sqrt{\mu}}K, \label{eq:converse3}
\end{align}
where $(a)$ follows from using Lemma~\ref{lm:pub-com} for each use of the public channel, $(b)$ follows from data processing inequality, and $(c)$ follows since $\widetilde{\tau}_{AQ^n} = \widetilde{\tau}_A \otimes \widetilde{\tau}_{Q^n}$. Combining \eqref{eq:converse1}, \eqref{eq:converse2}, and \eqref{eq:converse3}, we obtain the desired bound. 
\end{proof}

\section{A Quantum Resolvability Result}
\label{sec:quantum-res}
We prove a quantum channel resolvability result  based on the privacy amplification result of~\cite{renner2008security}. Note that we cannot use the standard quantum resolvability result of~\cite{hayashi2006quantum} since it depends on the dimension of the output space, which itself grows exponentially for $v^\ell \mapsto \rho_{Q^n}^{v^\ell}$. We first recall the definition of two-universal family of hash functions.
\begin{definition}
Let $\calX$ and $\calZ$ be two finite non-empty sets. A non-empty family of functions $\calF$ from $\calX$ to $\calZ$ is called \emph{two-universal} if for all distinct $x, x'\in \calX$, we have
\begin{align}
\frac{1}{\card{\calF}}\sum_{f\in \calF} \indic{f(x) = f(x')} \leq \frac{1}{\card{\calZ}}.
\end{align}
Moreover, $\calF$ is called \emph{regular} if for all $f\in \calF$ and all $z\in \calZ$, we have $\card{f^{-1}(z)} = \frac{\card{\calX}}{\card{\calZ}}$, where $f^{-1}(z) \eqdef \set{x\in \calX: f(x) = z}$.
\end{definition}
The next two results are well-known properties of two-universal hash functions.
\begin{proposition}
\label{lm:hash-exist}
Let $\calX$ and $\calZ$ be two non-empty finite sets such that $\card{\calX}$ is divisible by $\card{\calZ}$. There exists a two-universal regular family of functions from $\calX$ to $\calZ$.
\end{proposition}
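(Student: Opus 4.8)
The plan is to take $\calF$ to be the family of \emph{all} regular functions from $\calX$ to $\calZ$, i.e., all $f$ with $\card{f^{-1}(z)} = \card{\calX}/\card{\calZ}$ for every $z \in \calZ$, and to verify two-universality by a direct counting argument. Write $N \eqdef \card{\calX}$ and $M \eqdef \card{\calZ}$, so the divisibility hypothesis reads $M \mid N$. Regularity of each member of $\calF$ holds by definition, and $\calF$ is non-empty because any partition of $\calX$ into $M$ blocks of equal size $N/M$ (which exists precisely because $M \mid N$) yields such an $f$; hence the only content is the two-universality bound.

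First I would count the members of $\calF$ by a multinomial coefficient, $\card{\calF} = N! / ((N/M)!)^M$, since specifying a regular $f$ is the same as specifying an ordered partition of $\calX$ into $M$ labeled blocks of size $N/M$. Then, for fixed distinct $x, x' \in \calX$ and a fixed target $z \in \calZ$, I would count those $f \in \calF$ with $f(x) = f(x') = z$: force $x$ and $x'$ into block $z$ and distribute the remaining $N-2$ elements into blocks of sizes $N/M - 2$ (for block $z$) and $N/M$ (for the others), giving $(N-2)!/((N/M-2)!\,((N/M)!)^{M-1})$. Taking the ratio yields the "collide at $z$" probability $\frac{(N/M)(N/M-1)}{N(N-1)}$ for a uniformly random $f \in \calF$ (which correctly reads $0$ in the degenerate case $N/M = 1$, where no collisions exist).

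The final step is to sum over the $M$ choices of $z$ and compare to $1/M$:
\begin{align}
\frac{1}{\card{\calF}}\sum_{f\in\calF} \indic{f(x) = f(x')} = M\cdot\frac{(N/M)(N/M-1)}{N(N-1)} = \frac{N/M - 1}{N - 1} \leq \frac{1}{M},
\end{align}
where the last inequality is equivalent, after clearing denominators, to $N - M \leq N - 1$, i.e.\ to $M \geq 1$, which always holds. This establishes the two-universality condition for $\calF$ and completes the argument.

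I expect the main obstacle to be bookkeeping in the counting step rather than any conceptual difficulty: one must express both $\card{\calF}$ and the number of collision-inducing functions as multinomial coefficients and simplify their ratio without error, while keeping track of the $N/M = 1$ edge case. An alternative, more explicit construction via affine maps $x \mapsto (ax+b \bmod N)$ on $\Z_N$ followed by the regular reduction $\cdot \bmod M$ is also available, but it requires verifying that $ax+b$ ranges over permutations and that folding $\bmod M$ preserves the collision bound, which is comparatively more delicate; hence I would favor the uniform-over-all-regular-functions argument above.
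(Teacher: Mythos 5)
Your proposal is correct and takes essentially the same approach as the paper: the paper's proof consists of the single assertion that the family of all regular functions from $\calX$ to $\calZ$ is two-universal, which is exactly the family you chose. Your counting argument, yielding collision probability $\frac{N/M-1}{N-1} \leq \frac{1}{M}$, simply supplies the verification that the paper leaves implicit.
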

\begin{proof}
All functions $f$ with $f^{-1}(z) = \frac{\card{\calZ}}{\card{\calX}}$ form a two-universal family of hash functions.
\end{proof}
\begin{proposition}[~\cite{renner2008security}]
\label{lm:priv-amp}
 Let $\rho_{XA}$ be a cq state on $\calH_X \otimes \calH_A$ with respect to an orthonormal basis $\set{\ket{x}: x\in \calX}$ for $\calH_X$, and $\calF$ be a two-universal family of functions from $\calX$ to $\calZ$. We then have
\begin{multline}
\frac{1}{\card{\calF}}\sum_{f\in \calF}{\norm{(\calE^f_{X\to Z}\otimes\id_A) (\rho_{XA})-\mixed{Z}\otimes \rho_A  }}_1 \\
\leq \inf_{\epsilon\geq 0}\left[ 2\epsilon + 2^{-\frac{1}{2} \pr{\mathbb{H}^{\epsilon}_{\min}(X|A)_{\rho} - \log \card{\calZ} } }\right]
\end{multline}
\end{proposition}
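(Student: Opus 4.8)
The statement is the quantum leftover hash lemma, so the plan is a three–step argument: a smoothing reduction, a conversion of the trace norm into a weighted Hilbert–Schmidt norm against a well chosen reference, and an averaging over $\calF$ that is the only place where two–universality enters. Throughout write $\rho^f_{ZA}\eqdef(\calE^f_{X\to Z}\otimes\id_A)(\rho_{XA})$, and likewise $\bar\rho^f_{ZA}$ for any auxiliary state $\bar\rho_{XA}$.

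First I would remove the smoothing. Fix $\epsilon\geq 0$ and let $\bar\rho_{XA}$ be a state, classical on $X$, with $\norm[1]{\rho_{XA}-\bar\rho_{XA}}\leq\epsilon$ and $\mathbb{H}_{\min}(X|A)_{\bar\rho}=\mathbb{H}^{\epsilon}_{\min}(X|A)_{\rho}$; the minimizer may be kept classical on $X$ by dephasing, which does not increase the distance. Since $\calE^f_{X\to Z}$ and the partial trace over $Z$ are channels, monotonicity of the trace norm and the triangle inequality give, for every $f$, $\norm[1]{\rho^f_{ZA}-\mixed{Z}\otimes\rho_A}\leq\norm[1]{\bar\rho^f_{ZA}-\mixed{Z}\otimes\bar\rho_A}+2\epsilon$. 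Hence it suffices to bound the $\calF$–average of $\norm[1]{\bar\rho^f_{ZA}-\mixed{Z}\otimes\bar\rho_A}$ by $2^{-\frac12(\mathbb{H}_{\min}(X|A)_{\bar\rho}-\log\card{\calZ})}$; taking the infimum over $\epsilon$ then yields the claim.

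Next I would pick the reference adaptively. Let $\sigma^*_A$ achieve the conditional min–entropy, so $\bar\rho_{XA}\leq 2^{-h}\,\one_X\otimes\sigma^*_A$ with $h\eqdef\mathbb{H}_{\min}(X|A)_{\bar\rho}$, and set $\sigma\eqdef\mixed{Z}\otimes\sigma^*_A$, which has unit trace. Hölder's inequality $\norm[1]{ABC}\leq\norm[4]{A}\norm[2]{B}\norm[4]{C}$ with $A=C=\sigma^{1/4}$ gives $\norm[1]{M}\leq\norm[2]{\sigma^{-1/4}M\sigma^{-1/4}}$ for Hermitian $M$ supported on $\sigma$. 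Applying this to $M=\bar\rho^f_{ZA}-\mixed{Z}\otimes\bar\rho_A$ and moving the average inside the square root by concavity (Jensen), I am left to control $\frac{1}{\card{\calF}}\sum_f\tr{\big(\sigma^{-1/2}(\bar\rho^f_{ZA}-\mixed{Z}\otimes\bar\rho_A)\big)^2}$. Both operators are block diagonal in $Z$, so on expanding the square the two mixed terms and the $(\mixed{Z}\otimes\bar\rho_A)^2$ term cancel exactly (they all reduce to $\pm\tr{(\sigma^*_A)^{-1/2}\bar\rho_A(\sigma^*_A)^{-1/2}\bar\rho_A}$, using that the $Z$–blocks of $\bar\rho^f_{ZA}$ sum to $\bar\rho_A$), leaving only the collision contribution $\card{\calZ}\sum_{x,x':f(x)=f(x')}\tr{(\sigma^*_A)^{-1/2}\bar\rho^x_A(\sigma^*_A)^{-1/2}\bar\rho^{x'}_A}$, where $\bar\rho^x_A\eqdef\bra{x}\bar\rho_{XA}\ket{x}$. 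Averaging, two–universality supplies $\frac{1}{\card{\calF}}\sum_f\indic{f(x)=f(x')}\leq\frac{1}{\card{\calZ}}$ for $x\neq x'$, so the whole expression is at most $\card{\calZ}\cdot T$ with $T\eqdef\sum_x\tr{(\sigma^*_A)^{-1/2}\bar\rho^x_A(\sigma^*_A)^{-1/2}\bar\rho^x_A}$. Bounding $T$ is where the choice $\sigma^*_A$ pays off: with $N_x\eqdef(\sigma^*_A)^{-1/4}\bar\rho^x_A(\sigma^*_A)^{-1/4}\geq0$ one has $\tr{N_x^2}=\tr{(\sigma^*_A)^{-1/2}\bar\rho^x_A(\sigma^*_A)^{-1/2}\bar\rho^x_A}$, the operator inequality gives $N_x\leq 2^{-h}(\sigma^*_A)^{1/2}$, and hence $\tr{N_x^2}\leq 2^{-h}\tr{N_x(\sigma^*_A)^{1/2}}=2^{-h}\tr{\bar\rho^x_A}$ by cyclicity of the trace. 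Summing over $x$ yields $T\leq 2^{-h}\tr{\bar\rho_{XA}}=2^{-h}$, so the $\calF$–average is at most $\sqrt{\card{\calZ}\,2^{-h}}=2^{-\frac12(\mathbb{H}_{\min}(X|A)_{\bar\rho}-\log\card{\calZ})}$, which is exactly what the smoothing step required.

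The main obstacle is the averaging step, and especially the choice of reference hidden in it. If one weights by the actual marginal $\bar\rho_A$ rather than by the min–entropy optimizer $\sigma^*_A$, the collision sum acquires a factor $\tr{\bar\rho_A^{-1/2}\bar\rho_A}$ that does not reduce to $1$, and the bound fails to close; the clean collapse $T\leq 2^{-h}$ relies on $\sigma^*_A$ so that $\tr{N_x(\sigma^*_A)^{1/2}}$ telescopes to $\tr{\bar\rho^x_A}$ by cyclicity, while the target state in the norm may safely remain $\mixed{Z}\otimes\bar\rho_A$ because the cross–term cancellation is insensitive to the weight. A secondary, routine point is the bookkeeping of that cancellation together with the verification that the smoothed minimizer can be taken classical on $X$, so that $\calE^f_{X\to Z}$ acts blockwise and the operator inequality defining $h$ is available.
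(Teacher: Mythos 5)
Your overall strategy is the right one: the paper gives no proof of this proposition (it is imported verbatim from Renner's thesis, \cite{renner2008security}), and what you have reconstructed is essentially Renner's own argument --- smooth, pass from the trace norm to a weighted Hilbert--Schmidt norm via H\"older against the reference $\sigma \eqdef \mixed{Z}\otimes\sigma^*_A$ built from the min-entropy optimizer, expand the square blockwise in $Z$, and let two-universality control the collisions. The smoothing reduction (dephasing the minimizer, monotonicity plus triangle inequality giving the $2\epsilon$), the inequality $\norm[1]{M}\leq\norm[2]{\sigma^{-1/4}M\sigma^{-1/4}}$ for unit-trace $\sigma$, Jensen's inequality to move the $\calF$-average inside the square root, and the closing step $T\leq 2^{-h}$ via $N_x\leq 2^{-h}(\sigma^*_A)^{1/2}$ and cyclicity are all correct.

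The genuine flaw is in the bookkeeping of the expansion, and as written one of your intermediate claims is false. Writing $T_0 \eqdef \tr{(\sigma^*_A)^{-1/2}\bar{\rho}_A(\sigma^*_A)^{-1/2}\bar{\rho}_A}$, the two cross terms contribute $-T_0$ each and the $(\mixed{Z}\otimes\bar{\rho}_A)^2$ term contributes $+T_0$; they do not cancel exactly but leave $-T_0$. That leftover is not cosmetic. After averaging over $\calF$, the collision sum splits into its diagonal part, which equals $\card{\calZ}\,T$, and its off-diagonal part, which two-universality bounds only by $\sum_{x\neq x'}\tr{(\sigma^*_A)^{-1/2}\bar{\rho}^x_A(\sigma^*_A)^{-1/2}\bar{\rho}^{x'}_A} = T_0 - T$, a nonnegative and in general strictly positive quantity. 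Concretely, if $X$ is uniform and independent of $A$ with $\bar{\rho}^x_A = \frac{1}{\card{\calX}}\sigma^*_A$, the averaged collision term equals $1$ while $\card{\calZ}\,T = \card{\calZ}/\card{\calX}\ll 1$, so your step ``the averaged collision contribution is at most $\card{\calZ}\cdot T$'' fails badly exactly in the regime $\card{\calX}\gg\card{\calZ}$ that the paper needs. The correct accounting keeps the $-T_0$: the averaged expansion is at most $\card{\calZ}T + \pr{T_0 - T} - T_0 = \pr{\card{\calZ}-1}T \leq \card{\calZ}T$, i.e., the cross-term deficit you discarded is precisely what absorbs the off-diagonal collisions. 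Your two slips cancel, so the final bound $2^{-\frac{1}{2}\pr{h-\log\card{\calZ}}}$ and hence the proposition are correct, but the proof requires this repair to be valid.
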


We are now ready to establish the main result of this section, which shows the existence of a resolvability code. The classical counter-part of this result was proved in~\cite{Kadampot2018}.
\begin{lemma}
\label{lm:res}
Let $\rho_{XA} = \sum_{x\in \calX} \frac{1}{|\calX|} \ket{x}\bra{x} \otimes \rho_A^x$ be a cq state on $\calH_X\otimes \calH_A$. Let $\delta>0$, $h$ be a positive integer such that $\card{\calX}$ is divisible by $h$ and
\begin{align}
\label{eq:logk-bound}
\log h \geq \log \card{\calX} - \mathbb{H}_{\min}^{\frac{\delta}{4}}(X|A)_{\rho} +2 \log \frac{2}{\delta}.
\end{align}
For a regular two-universal family of hash functions $\calF$ from $\calX$ to $\calZ$, we have
\begin{align}
\frac{1}{\card{\calF}}\sum_{f\in \calF}\frac{1}{\card{\calZ}} \sum_{z\in \calZ} {\norm{\rho_A - \frac{1}{h}\sum_{x\in f^{-1}(z)}  \rho^{x}_A}}_1\leq \delta.
\end{align}
In particular, there exists a function $g:\intseq{1}{h}\to \calX$ such that
\begin{align}
{\norm{\rho_A - \frac{1}{h}\sum_{r=1}^h  \rho^{g(r)}_A}}_1\leq \delta.
\end{align}
\end{lemma}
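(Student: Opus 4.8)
The plan is to recognize the left-hand side of the claimed inequality as exactly the quantity controlled by the privacy amplification bound of Proposition~\ref{lm:priv-amp}, applied with the same two-universal family $\calF$. First I would compute the action of $\calE^f_{X\to Z}\otimes\id_A$ on $\rho_{XA} = \frac{1}{\card{\calX}}\sum_{x}\kb{x}\otimes\rho_A^x$. Because $\calF$ is \emph{regular}, every fiber obeys $\card{f^{-1}(z)} = \card{\calX}/\card{\calZ} = h$, and a direct calculation gives
\begin{align}
(\calE^f_{X\to Z}\otimes\id_A)(\rho_{XA}) = \sum_{z\in\calZ}\frac{1}{\card{\calZ}}\kb{z}_Z\otimes\frac{1}{h}\sum_{x\in f^{-1}(z)}\rho_A^x,
\end{align}
so that subtracting $\mixed{Z}\otimes\rho_A = \sum_{z}\frac{1}{\card{\calZ}}\kb{z}_Z\otimes\rho_A$ produces an operator that is block-diagonal in the orthonormal family $\set{\ket{z}_Z}$.

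The next step exploits this block structure: the trace norm of a block-diagonal operator is the sum of the trace norms of its blocks, whence
\begin{align}
{\norm{(\calE^f_{X\to Z}\otimes\id_A)(\rho_{XA})-\mixed{Z}\otimes\rho_A}}_1 = \frac{1}{\card{\calZ}}\sum_{z\in\calZ}{\norm{\rho_A-\frac{1}{h}\sum_{x\in f^{-1}(z)}\rho_A^x}}_1.
\end{align}
Averaging this identity over $f\in\calF$ reproduces verbatim the double average appearing in the statement, so Proposition~\ref{lm:priv-amp} upper-bounds it by $\inf_{\epsilon\geq 0}[2\epsilon + 2^{-\frac12(\mathbb{H}^{\epsilon}_{\min}(X|A)_\rho-\log\card{\calZ})}]$.

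It then remains to verify that the hypothesis \eqref{eq:logk-bound} forces this infimum below $\delta$. I would simply take $\epsilon=\frac{\delta}{4}$, which reduces the claim to $2^{-\frac12(\mathbb{H}_{\min}^{\frac{\delta}{4}}(X|A)_\rho-\log\card{\calZ})}\leq\frac{\delta}{2}$; taking base-two logarithms and substituting $\log\card{\calZ}=\log\card{\calX}-\log h$ turns this into precisely $\log h\geq\log\card{\calX}-\mathbb{H}_{\min}^{\frac{\delta}{4}}(X|A)_\rho+2\log\frac{2}{\delta}$, which is the assumption. Finally, the ``in particular'' statement is a standard averaging argument: since the mean over $(f,z)$ is at most $\delta$, some pair $(f^\star,z^\star)$ achieves block trace norm at most $\delta$, and enumerating the $h$ elements of $(f^\star)^{-1}(z^\star)$ defines the desired $g:\intseq{1}{h}\to\calX$.

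Since Proposition~\ref{lm:priv-amp} does all the heavy lifting, I do not expect a genuine obstacle. The one point that must be handled with care is the regularity computation: only when every fiber has the same cardinality $h$ does each block reduce to the uniform mixture $\frac{1}{h}\sum_{x\in f^{-1}(z)}\rho_A^x$ with the clean prefactor, which is exactly what aligns the resolvability average with the privacy-amplification average. Everything else is the bookkeeping of the infimum and the block-diagonal trace-norm identity.
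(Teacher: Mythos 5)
Your proposal is correct and follows essentially the same route as the paper's own proof: both invoke Proposition~\ref{lm:priv-amp} with the regular two-universal family, use regularity to identify each block of $(\calE^f_{X\to Z}\otimes\id_A)(\rho_{XA})-\mixed{Z}\otimes\rho_A$ as $\frac{1}{h}\sum_{x\in f^{-1}(z)}\rho_A^x-\rho_A$, exploit the block-diagonal trace-norm identity, and close the bound by choosing $\epsilon=\frac{\delta}{4}$ together with $\log\card{\calZ}=\log\card{\calX}-\log h$ and hypothesis \eqref{eq:logk-bound}, finishing with the same averaging argument to extract $g$.
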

\begin{proof}
Let us define $\calZ \eqdef \intseq{1}{\frac{\card{\calX}}{h}}$. By Proposition~\ref{lm:hash-exist} and Proposition~\ref{lm:priv-amp},
\begin{align}
\frac{1}{\card{\calF}}\sum_{f\in \calF}{\norm{(\calE^f_{X\to Z}\otimes\id_A) (\rho_{XA})-\mixed{Z}\otimes \rho_A  }}_1 \\
\leq \inf_{\epsilon\geq 0}\left[ 2\epsilon + 2^{-\frac{1}{2} \pr{\mathbb{H}^{\epsilon}_{\min}(X|A)_{\rho} - \log \card{\calZ} } }\right]. \label{eq:priv-amp-app}
\end{align}

By definition of $\calE^f_{X\to Z}$ and $\rho_{XA}$, we have
\begin{align}
(\calE^f_{X\to Z}\otimes\id_A) (\rho_{XA}) 
&= \frac{1}{\card{\calX}} \sum_{x\in \calX} \kb{f(x)}\otimes \rho_A^x\\
&= \frac{1}{\card{\calZ}} \sum_{z\in \calZ} \kb{z}\otimes \pr{\frac{1}{h}\sum_{x\in f^{-1}(z)} \rho_A^x}.
\end{align}
Therefore, we have
\begin{multline}
{\norm{(\calE^f_{X\to Z}\otimes\id_A) (\rho_{XA})-\mixed{Z}\otimes \rho_A  }}_1\\
\begin{split}
&= {\norm{ \frac{1}{\card{\calZ}}\sum_{z\in \calZ} \kb{z} \otimes \pr{\frac{1}{h}\sum_{x\in f^{-1}(z)} \rho_A^{x} -\rho_A} }}_1\\
&= \frac{1}{\card{\calZ}}\sum_{z\in \calZ} {\norm{{\frac{1}{h}\sum_{x\in f^{-1}(z)} \rho_A^{x} -\rho_A} }}_1 
.\end{split}
\label{eq:average-dens}
\end{multline}
Combining \eqref{eq:priv-amp-app} and \eqref{eq:average-dens}, we have for at least one $z\in \calZ$ and at least one $f\in \calF$,
\begin{multline}
{\norm{{\frac{1}{h}\sum_{x\in f^{-1}(z)} \rho_A^{x} -\rho_A} }}_1 \\
\begin{split}
&\leq \inf_{\epsilon\geq 0}\left[ 2\epsilon + 2^{-\frac{1}{2} \pr{\mathbb{H}^{\epsilon}_{\min}(X|A)_{\rho} - \log \card{\calZ} } }\right]\\
& \stackrel{(a)}{=} \inf_{\epsilon\geq 0}\left[ 2\epsilon + 2^{-\frac{1}{2} \pr{\mathbb{H}^{\epsilon}_{\min}(X|A)_{\rho} + \log h - \log \card{\calX}  } }\right]\stackrel{(a)}{\leq} \delta,
\end{split}
\end{multline}
where $(a)$ follows from \eqref{eq:logk-bound}. Taking a bijection $g:\intseq{1}{h} \to f^{-1}(z)$ completes the proof.
\end{proof}



\section{Reducing public communication when $\nnv$ is a power of prime}
\label{sec:anal-quan-res}
 In the next lemma, we show that under our symmetry conditions on $\calF$ and $\rho_{XA}$, the choice of $z$ does not matter.

\begin{lemma}
\label{lm:res-sym}
Suppose that for all $f\in \calF$, $z,z' \in \calZ$, there exist a bijection $\phi:\calX \to \calX$ and unitary $U$ acting on $\calH_A$ (depending on $z$, $z'$, and $f$) such that
\begin{align}
\phi(f^{-1}(z)) &= f^{-1}(z') \label{eq:sym-1}\\
\rho_A^{\phi(x)} &= U\rho_A^x U^\dagger.\label{eq:sym-2}
\end{align}
We then have 
\begin{align}
\label{eq:z-zp-eq}
{\norm{{\frac{1}{h}\sum_{x\in f^{-1}(z)} \rho_A^{x} -\rho_A} }}_1 = {\norm{{\frac{1}{h}\sum_{x\in f^{-1}(z')} \rho_A^{x} -\rho_A} }}_1.
\end{align}
\end{lemma}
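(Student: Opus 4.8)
The plan is to exploit the two symmetry hypotheses to transport the sum over $f^{-1}(z')$ back to a sum over $f^{-1}(z)$ by conjugation with $U$, and then to invoke the unitary invariance of the trace norm. First I would use the bijection $\phi$ to re-index the fiber: since $\phi$ maps $f^{-1}(z)$ bijectively onto $f^{-1}(z')$ by \eqref{eq:sym-1} and $\rho_A^{\phi(x)} = U\rho_A^x U^\dagger$ by \eqref{eq:sym-2}, we obtain
\[\frac{1}{h}\sum_{x\in f^{-1}(z')} \rho_A^{x} = \frac{1}{h}\sum_{x\in f^{-1}(z)} \rho_A^{\phi(x)} = U\pr{\frac{1}{h}\sum_{x\in f^{-1}(z)} \rho_A^{x}}U^\dagger.\]

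The crucial intermediate step is to show that the average state $\rho_A = \frac{1}{\card{\calX}}\sum_{x\in\calX}\rho_A^x$ is itself invariant under conjugation by $U$. This is where the hypothesis that $\phi$ is a bijection of the \emph{entire} set $\calX$, and not merely a bijection between the two fibers, becomes essential: applying \eqref{eq:sym-2} termwise and relabeling the summation index via $\phi$ gives $U\rho_A U^\dagger = \frac{1}{\card{\calX}}\sum_{x}\rho_A^{\phi(x)} = \frac{1}{\card{\calX}}\sum_{x}\rho_A^{x} = \rho_A$. I expect this to be the only genuinely nontrivial point, and it is worth stating explicitly, since the weaker fiber-to-fiber statement \eqref{eq:sym-1} alone would not suffice to fix the reference state $\rho_A$.

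Combining these two facts, I would write
\[\frac{1}{h}\sum_{x\in f^{-1}(z')}\rho_A^x - \rho_A = U\pr{\frac{1}{h}\sum_{x\in f^{-1}(z)}\rho_A^x - \rho_A}U^\dagger,\]
and conclude \eqref{eq:z-zp-eq} immediately from the unitary invariance of the trace norm, $\norm[1]{U X U^\dagger} = \norm[1]{X}$. Everything apart from the invariance of $\rho_A$ is routine bookkeeping with the reindexing together with a standard property of the trace norm, so no separate lemma is needed.
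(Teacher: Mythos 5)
Your proof is correct and follows essentially the same route as the paper's: reindexing the fiber via $\phi$ together with \eqref{eq:sym-2}, establishing $U\rho_A U^\dagger = \rho_A$ by relabeling the full sum over $\calX$ (which, as you rightly note, is where the hypothesis that $\phi$ is a bijection of all of $\calX$ is used), and concluding by unitary invariance of the trace norm. The only difference is cosmetic ordering: the paper conjugates inside the norm first and identifies the $z'$ sum afterward, whereas you transport the sum first and apply the norm invariance last.
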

\begin{proof}
Note that
\begin{align}
{\norm{{\frac{1}{h}\sum_{x\in f^{-1}(z)} \rho_A^{x} -\rho_A} }}_1 
&= {\norm{{U\pr{\frac{1}{h}\sum_{x\in f^{-1}(z)} \rho_A^{x} -\rho_A}}U^\dagger }}_1\\
&=  {\norm{{{\frac{1}{h}\sum_{x\in f^{-1}(z)} U\rho_A^{x}U^\dagger -U\rho_AU^\dagger}} }}_1\\
&= {\norm{{{\frac{1}{h}\sum_{x\in f^{-1}(z)} \rho_A^{\phi(x)} -U\rho_AU^\dagger}} }}_1\\
&=  {\norm{{{\frac{1}{h}\sum_{x\in f^{-1}(z')} \rho_A^{x} -U\rho_AU^\dagger}} }}_1.
\end{align}
Moreover, we have
\begin{align}
U\rho_AU^\dagger = U\pr{\frac{1}{\card{\calX} }\sum_{x\in \calX} \rho_A^x}U^\dagger
&=\frac{1}{\card{\calX}} \sum_{x\in \calX} U \rho_A^x U^\dagger\\
&=\frac{1}{\card{\calX}} \sum_{x\in \calX} \rho_A^{\phi(x)}\\
&= \frac{1}{\card{\calX}}\sum_{x\in \calX} \rho_A^{x}= \rho_A.
\end{align}
Therefore, we obtain \eqref{eq:z-zp-eq}.
\end{proof}

When $\nnv$ is a power of a prime, we provide an example of two-universal hash functions satisfying the conditions of Lemma~\ref{lm:res-sym}. We assume  in this paragraph only that $\calV = \intseq{0}{\nnv-1}$ to be consistent with the standard notation for finite fields. Note first that $\calV^\ell$ is a field with component-wise addition modulo $\nnv$ and a multiplication operation denoted by $\odot$. We use the short-hand $0^{m}$ for the all-zero sequence of length $m$ and $\cdot | \cdot$ for the concatenation of two sequences. For $k\in \intseq{1}{\ell}$ and $u^\ell \in \calV^\ell$, let $f_{u^\ell}(v^\ell) $ be the first $k$ elements of $u^\ell \odot v^\ell$. By~\cite{bellare2012polynomial}, $\calF=\set{f_{u^\ell}: u^\ell \in \calV^\ell \setminus \set{0^\ell}}$ is a regular two-universal class of hash functions. Moreover, for any $u^\ell \in \calV^\ell \setminus \set{0}$, $z^k, {z'}^k \in \calV^k$, we define $\phi(v^\ell) = (({z'}^k - {z}^k)| 0^{\ell - k})\odot (u^\ell)^{-1} + v^{\ell} $. We show that $\phi$ satisfies \eqref{eq:sym-1} and \eqref{eq:sym-2}. Note that
\begin{align}
\phi\pr{f_{u^\ell}^{-1}(z^k)} 
&= \phi\pr{\set{v^\ell: \exists r^{\ell - k}:z^k|r^{\ell-k} = u^\ell \odot v^{\ell}}}\\
&= \left\{v^\ell +(({z'}^k - {z}^k)| 0^{\ell - k})\odot (u^\ell)^{-1} : \right. \nonumber\\
& \phantom{=====}\left. \exists r^{\ell - k}:z^k|r^{\ell-k} = u^\ell \odot v^{\ell} \right\}\\
&= \left\{ v^\ell  : \exists r^{\ell - k}:z^k|r^{\ell-k} = \right.\nonumber\\
&\phantom{=} \left. u^\ell \odot (v^{\ell} -(({z'}^k - {z}^k)| 0^{\ell - k})\odot (u^\ell)^{-1}) \right\}\\
&= \set{v^\ell  : \exists r^{\ell - k}:{z'}^k|r^{\ell-k} = u^\ell \odot v^\ell }\\
&= f^{-1}_{u^\ell}({z'}^k)
\end{align}
Furthermore, let $U_{\mathrm{CS}}$ be the unitary operation on $\calH_Q^{\proddist m}$ corresponding to cyclic shift of length 1, i.e., $\ket{\phi_1}\otimes  \cdots \otimes \ket{\phi_m} \mapsto \ket{\phi_m} \otimes \ket{\phi_1} \otimes \cdots \ket{\phi_{m-1}}$. By definition of $d(x, v)$ and $\rho_{Q^n}^{v^\ell}$, we have
\begin{align}
\rho_{Q^n}^{v^\ell + {v'}^\ell} = \pr{U_{\mathrm{CS}}^{v'_1} \otimes \cdots \otimes U_{\mathrm{CS}}^{v'_\ell}} \rho_{Q^n}^{v^\ell}  \pr{U_{\mathrm{CS}}^{v'_1} \otimes \cdots \otimes U_{\mathrm{CS}}^{v'_\ell}}^{\dagger},
\end{align}
where $v^\ell + {v'}^\ell$ is modulo $\nnv$. We therefore conclude that \eqref{eq:sym-2} holds.

\section{Proof of Theorem~\ref{th:data-proc-ref}}
\label{sec:data-proc-ref}
To prove Theorem~\ref{th:data-proc-ref}, we need   the  following  tools.

\begin{theorem}\textnormal{(\cite[Theorem 12.1.1]{wilde2013quantum})}
\label{th:rec}
Let $A$ and $B$ be two quantum systems. Let $\rho_A^0$ and $\rho_A^1$ be in $\calD(\calH_A)$ and $\calN:\calD(\calH_A)\to \calD(\calH_B)$ be a quantum channel. There exists a quantum channel $\calR:\calD(\calH_B)\to \calD(\calH_A)$ (depending only on $\calN$ and $\rho_A^0$) such that
\begin{multline}
\D{\rho_A^1}{\rho_A^0} - \D{\calN(\rho_A^1)}{\calN(\rho_A^0)}\\ \geq
 -\log F(\rho_A^1, (\calR \circ \calN)(\rho_A^1))
\end{multline}
and
\begin{align}
(\calR\circ \calN)(\rho_A^0) &= \rho_A^0.
\end{align}
\end{theorem}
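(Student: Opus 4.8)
The plan is to combine the exact recoverability statement of Theorem~\ref{th:rec}, applied to the complementary channel $\calE$, with an information--disturbance argument that exploits the near-product structure $C(\phi^x_{A'}\otimes\nu_{A''},\rho^x_A)\le\delta_x$. Applying Theorem~\ref{th:rec} to the channel $\calE:\calD(\calH_A)\to\calD(\calH_E)$ and the states $\rho^0_A,\rho^1_A$ produces a recovery channel $\calR:\calD(\calH_E)\to\calD(\calH_A)$ with $(\calR\circ\calE)(\rho^0_A)=\rho^0_A$ and
\[
\D{\calE(\rho^1_A)}{\calE(\rho^0_A)}\le\D{\rho^1_A}{\rho^0_A}+\log F\pr{\rho^1_A,(\calR\circ\calE)(\rho^1_A)}.
\]
Thus the theorem reduces to the single estimate $F\pr{\rho^1_A,(\calR\circ\calE)(\rho^1_A)}\le 1-\eta$: the recovery of the input from Eve's output must fail by at least $\eta$. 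This is exactly where the hypothesis \eqref{eq:eta-assump} on the distinguishability of Bob's outputs $\calN(\rho^0_A),\calN(\rho^1_A)$ enters, through the principle that if Bob can tell $\rho^0_A$ from $\rho^1_A$ well then, by no-cloning, the environment cannot also carry enough information to reconstruct the input.

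First I would pass from the actual inputs $\rho^x_A$ to the ideal product states $\xi^x_A\eqdef\phi^x_{A'}\otimes\nu_{A''}$. Using that $C$ is a metric, is non-increasing under channels, and is invariant under the Stinespring isometry $V_{A\to BE}$ dilating $\calN$ (with $\calE=\ptr{B}{V\cdot V^\dagger}$), the approximations $C(\xi^x_A,\rho^x_A)\le\delta_x$ propagate through $\calR\circ\calE$ and through $\calN$. Tracking the triangle inequalities shows that $\calR$ recovers $\xi^0_A$ within $2\delta_0$ and $\xi^1_A$ within $\sqrt\eta+2\delta_1$ in $C$-distance, which is precisely the origin of the quantity $\lambda=2\delta_0+\sqrt{\eta+4\sqrt\eta\,\delta_1+4\delta_1^2}=\sqrt\eta+2\delta$ in the statement; the same bookkeeping converts a bound on $C(\calN(\xi^0_A),\calN(\xi^1_A))$ into one on $C(\calN(\rho^0_A),\calN(\rho^1_A))$ at the cost of the correction $2\sqrt{1-y}\,\delta+\delta^2$, with $y=F(\phi^0_{A'},\phi^1_{A'})$.

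The heart of the proof is the ideal information--disturbance estimate: from the fact that $\calE$ is approximately reversible on $\xi^0_A,\xi^1_A$ (combined slack $\lambda$) and that these differ only through the pure factors $\phi^0_{A'},\phi^1_{A'}$ of overlap $\sqrt y$, one must deduce that the complementary outputs are nearly indistinguishable, namely $C(\calN(\xi^0_A),\calN(\xi^1_A))\le\sqrt w+\lambda$ with $w=\frac{2\sqrt{1-y}\,\lambda+\lambda^2}{y}$. Squaring and using $F=1-C^2$ gives exactly $F(\calN(\xi^0_A),\calN(\xi^1_A))\ge 1-(\sqrt w+\lambda)^2=\aleph(\lambda,y)$, so that hypothesis \eqref{eq:eta-assump}, which forces $F(\calN(\rho^1_A),\calN(\rho^0_A))$ strictly below $\aleph(\lambda,y)$ minus the correction above, is contradicted unless the recovery indeed fails by $\eta$. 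I would establish this estimate by writing the global purification $\ket{\Psi^x}_{BER}=(V\otimes\id_R)\ket{\phi^x}_{A'}\ket{\nu}_{A''R}$, decomposing $\ket{\phi^1}_{A'}=\sqrt y\,\ket{\phi^0}_{A'}+\sqrt{1-y}\,\ket{\phi^{0,\perp}}_{A'}$, and tracking how the orthogonal component is distributed between Bob and the environment: approximate reversibility of $\calE$ forces this component to be carried almost entirely by the environment, leaving Bob's two states close.

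The main obstacle is exactly this last estimate. Naive chains of the triangle inequality and monotonicity of fidelity are either circular or vacuous here --- indeed, for a maximally entangled global state both marginals can be indistinguishable while the global states are orthogonal, so the conclusion genuinely requires the pure-factor hypothesis and cannot follow from marginal relations alone. The non-orthogonality $y>0$ is what makes the no-cloning leverage quantitative, and it is responsible for the factor $1/y$ in $w$ (and for the bound degrading as $y\to 0$, where orthogonal signals may be copied to both parties). Getting the error amplification through the $1/\sqrt y$-type step sharp enough to land on the stated $\aleph$, rather than a looser constant, is the delicate part; everything else is the routine metric and Stinespring bookkeeping described above.
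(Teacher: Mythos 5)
You have not proved the statement at hand; you have assumed it. The target is Theorem~\ref{th:rec} itself --- the existence, for a given channel $\calN$ and reference state $\rho_A^0$, of a recovery channel $\calR$ with $(\calR\circ\calN)(\rho_A^0)=\rho_A^0$ and
\begin{align}
\D{\rho_A^1}{\rho_A^0} - \D{\calN(\rho_A^1)}{\calN(\rho_A^0)} \geq -\log F\pr{\rho_A^1, (\calR\circ\calN)(\rho_A^1)}.
\end{align}
Your very first step, ``applying Theorem~\ref{th:rec} to the channel $\calE$,'' invokes this result as a black box, so the argument is circular with respect to the statement you were asked to prove. Everything that follows --- the quantities $\eta$, $\lambda$, $\aleph$, the hypothesis \eqref{eq:eta-assump}, and the near-product structure $C(\phi^x_{A'}\otimes\nu_{A''},\rho^x_A)\leq\delta_x$ --- belongs to the \emph{downstream} result, Theorem~\ref{th:data-proc-ref}; none of these objects even appears in the statement of Theorem~\ref{th:rec}, which holds for arbitrary states and channels with no product-structure or distinguishability assumptions. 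What you have sketched is, in substance, the paper's own proof of Theorem~\ref{th:data-proc-ref} (reduction via the recovery channel on $\calE$, Stinespring bookkeeping showing $\calN$ is degraded with respect to the complement of $\calR\circ\calE$, and the information--disturbance estimates of Lemma~\ref{lm:fid-to-l1} and Lemma~\ref{lm:pure-dist}, including the $1/y$ amplification), and that part tracks the appendix faithfully --- but it is a proof of the wrong theorem.

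For the statement actually posed, note that the paper supplies no proof either: it imports the result verbatim from \cite[Theorem 12.1.1]{wilde2013quantum}, i.e., the recoverability strengthening of the monotonicity of quantum relative entropy due to Junge, Renner, Sutter, Wilde, and Winter. A genuine proof requires machinery of a completely different nature from your metric arguments: one constructs $\calR$ explicitly as an average of rotated Petz recovery maps built from $\rho_A^0$ and $\calN$ (which is why $\calR$ depends only on these, and why $(\calR\circ\calN)(\rho_A^0)=\rho_A^0$, since each rotated Petz map fixes $\rho_A^0$), and the entropy--fidelity inequality is obtained through R\'enyi-divergence and complex-interpolation (Hadamard three-line) techniques. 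Triangle inequalities for $C(\cdot,\cdot)$, monotonicity of fidelity, and Uhlmann/Stinespring dilations --- the tools in your sketch --- cannot relate a difference of relative entropies to a recovery fidelity, so no rearrangement of your outline closes this gap.
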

\begin{lemma}
Let $A$ and $B$ be two quantum systems such that $A$ is a composition of two sub-systems $A'$ and $A''$.  Let $\rho_A^0$ and $\rho_A^1$ be in $\calD(\calH_A)$ such that for two pure states $\ket{\phi^0}_{A'}$ and $\ket{\phi^1}_{A'}$ in $\calH_{A'}$ and a mixed state $\nu_{A''}$ in $\calD(\calH_{A''})$, we have $C(\phi^x_{A'} \otimes \nu_{A''}, \rho^x_A)  \leq \delta_x$. Let $\calN: \calD(\calH_A)\to \calD(\calH_A)$ be a quantum channels such that $ F(\rho_A^x, \calN(\rho_A^x)) \geq 1-\epsilon_x$. We then have
\begin{multline}
F(\calE(\rho_A^1), \calE(\rho_A^0))\\ \geq   \aleph(\lambda, F(\phi^1_{A'}, \phi^0_{A'})) - 2\sqrt{1-F(\phi^1_{A'}, \phi^0_{A'})}\delta -\delta^2,
\end{multline}
where $\delta \eqdef \sum_x \delta_x$, $\lambda = \sum_x \sqrt{\epsilon_x + 4\sqrt{\epsilon_x}\delta_x + 4 \delta_x^2}$, 
 $\calE$ is a complementary channel to $\calN$.
\label{lm:fid-to-l1}
\end{lemma}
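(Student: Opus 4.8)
The plan is to dilate $\calN$ and exploit the principle that a channel which nearly preserves a \emph{pure} signal must leak almost nothing to its complementary environment; the factorization $\phi^x_{A'}\otimes\nu_{A''}$ is exactly what lets me extract a pure signal on which to run this argument. First I would fix a Stinespring isometry $V:\calH_A\to\calH_B\otimes\calH_E$ (with $B\cong A$) so that $\calN(\cdot)=\ptr{E}{V(\cdot)V^\dagger}$ and $\calE(\cdot)=\ptr{B}{V(\cdot)V^\dagger}$, purify $\nu_{A''}$ with a reference $R$ via $\ket{\nu}_{A''R}$, and set $\ket{\Psi^x}_{BER}\eqdef(V\otimes\id_R)\pr{\ket{\phi^x}_{A'}\ket{\nu}_{A''R}}$, a purification of $\calN(\phi^x_{A'}\otimes\nu_{A''})$. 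Transferring the hypotheses from $\rho_A^x$ to the product state via the triangle inequality for $C$ and monotonicity of $C$ under $\calN$ gives
\[
c_x\eqdef C\pr{\phi^x_{A'}\otimes\nu_{A''},\,\calN(\phi^x_{A'}\otimes\nu_{A''})}\leq \delta_x+\sqrt{\epsilon_x}+\delta_x=\sqrt{\epsilon_x}+2\delta_x,
\]
so that $c_0+c_1\leq\lambda$, using $\lambda=\sum_x\sqrt{(\sqrt{\epsilon_x}+2\delta_x)^2}$.

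Next I would invoke Uhlmann's theorem to choose purifications $\ket{\Theta^x}=\ket{\phi^x}_{B'}\otimes\ket{\zeta^x}_{B''ER}$ of $\phi^x_{A'}\otimes\nu_{A''}$ with $C(\Psi^x,\Theta^x)=c_x$; the pure factor $\ket{\phi^x}_{B'}$ splits off precisely because $\phi^x$ is pure on $A'$. The two overlap identities are the crux: $V^\dagger V=\one_A$ gives $\braket{\Psi^0}{\Psi^1}=\braket{\phi^0}{\phi^1}$, hence $C(\Psi^0,\Psi^1)=\sqrt{1-y}$ with $y\eqdef F(\phi^1_{A'},\phi^0_{A'})$, while the product form gives $\braket{\Theta^0}{\Theta^1}=\braket{\phi^0}{\phi^1}\braket{\zeta^0}{\zeta^1}$. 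Now the triangle inequality for $C$ yields $C(\Theta^0,\Theta^1)\leq c_0+\sqrt{1-y}+c_1\leq\sqrt{1-y}+\lambda$, and dividing $F(\Theta^0,\Theta^1)=1-C(\Theta^0,\Theta^1)^2$ by $y$ isolates the environment overlap:
\[
C(\zeta^0,\zeta^1)^2=1-\frac{F(\Theta^0,\Theta^1)}{y}\leq\frac{2\sqrt{1-y}\,\lambda+\lambda^2}{y}\eqdef s.
\]

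Tracing out $BR$ and using monotonicity of $C$, I get $C(\calE(\phi^x_{A'}\otimes\nu_{A''}),\ptr{B''R}{\zeta^x})\leq c_x$ and $C(\ptr{B''R}{\zeta^0},\ptr{B''R}{\zeta^1})\leq C(\zeta^0,\zeta^1)\leq\sqrt s$, so one more triangle inequality gives $C(\calE(\phi^0_{A'}\otimes\nu_{A''}),\calE(\phi^1_{A'}\otimes\nu_{A''}))\leq\lambda+\sqrt s$, i.e. $F(\calE(\phi^0_{A'}\otimes\nu_{A''}),\calE(\phi^1_{A'}\otimes\nu_{A''}))\geq 1-(\lambda+\sqrt s)^2=\aleph(\lambda,y)$. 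Finally I would return from the product states to $\rho_A^x$: monotonicity gives $C(\calE(\rho_A^x),\calE(\phi^x_{A'}\otimes\nu_{A''}))\leq\delta_x$, while data processing gives $c\eqdef C(\calE(\phi^0_{A'}\otimes\nu_{A''}),\calE(\phi^1_{A'}\otimes\nu_{A''}))\leq C(\phi^0_{A'}\otimes\nu_{A''},\phi^1_{A'}\otimes\nu_{A''})=\sqrt{1-y}$; feeding both into the triangle inequality together with $F=1-C^2$ turns the cross term $-2\delta c$ into $-2\sqrt{1-y}\,\delta$ and yields the claimed $\aleph(\lambda,y)-2\sqrt{1-y}\,\delta-\delta^2$.

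I expect the main obstacle to be threading the error budget so that it collapses to \emph{exactly} the $\aleph$ form rather than some looser expression; everything funnels through the two facts that $C(\Psi^0,\Psi^1)=\sqrt{1-y}$ and that the pure factor decouples, so that the single division by $y$ cleanly separates $\braket{\zeta^0}{\zeta^1}$. A secondary care point is that Uhlmann's optimal purification may entangle $E$ with $R$, so $\ket{\zeta^x}_{B''ER}$ need not factor as $\ket{\nu}\ket{e^x}$; the argument above is deliberately arranged to use only $\ptr{B''R}{\zeta^x}$ and the overlap $\braket{\zeta^0}{\zeta^1}$, both of which are insensitive to this.
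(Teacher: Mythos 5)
Your proof is correct and is essentially the paper's own argument: the paper likewise passes to a Stinespring dilation, purifies $\nu_{A''}$, applies Uhlmann's theorem (with the closest purification of the product state necessarily factoring as $\phi^x$ tensored with a purification of $\nu$, because $\phi^x$ is pure), uses the isometric invariance $\braket{\Psi^0}{\Psi^1}=\braket{\phi^0}{\phi^1}$ together with overlap multiplicativity and a division by $y=F(\phi^1_{A'},\phi^0_{A'})$ to isolate the environment overlap, and finally perturbs from the product states back to $\rho_A^x$ (this is the content of its Lemmas~\ref{lm:triangle-fid} and~\ref{lm:pure-dist}). The only differences are presentational: you keep the bookkeeping additive in the metric $C$ where the paper repeatedly invokes a fidelity-form triangle lemma, and you transfer the hypotheses to the product states up front via $c_x\leq\sqrt{\epsilon_x}+2\delta_x$, using the identity $\sqrt{\epsilon_x+4\sqrt{\epsilon_x}\delta_x+4\delta_x^2}=\sqrt{\epsilon_x}+2\delta_x$, which is a cleaner but numerically identical accounting of the paper's bound on its $\lambda'$.
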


\begin{proof}
See Appendix~\ref{sec:fid-to-l1}.
\end{proof}

We are now ready to provide the proof of Theorem~\ref{th:data-proc-ref}.
\begin{proof}
By Theorem~\ref{th:rec}, there exists a channel $\calR:\calD(E) \to \calD(A)$ such that
\begin{align}
\label{eq:rec-e-chan}\D{\rho_A^1}{\rho_A^0} - \D{\calE(\rho_A^1)}{\calE(\rho_A^0)} &\geq -\log F(\rho_A^1, (\calR \circ \calE)(\rho_A^1))\\
(\calR\circ \calE)(\rho_A^0) &= \rho_A^0.
\end{align}
Let $\calU_{A\to BE}$ be an isometric extension of $\calN$ compatible with $\calE$. Let $\calW_{E\to AF}$ be an isometric extension of $\calR$. The isometry $(\one_B \otimes \calW_{E\to AF})\calU_{A\to BE}$ is an isometric extension of $\calR \circ \calE$. Hence, the mapping 
\begin{multline}
\rho \mapsto \\
\ptr{A}{(\one_B \otimes \calW_{E\to AF})\calU_{A\to BE} \rho ( (\one_B \otimes \calW_{E\to AF})\calU_{A\to BE})^\dagger}
\end{multline}
is a complementary channel of $\calR \circ \calE$ and
\begin{multline}
\ptr{AF}{(\one_B \otimes \calW_{E\to AF})\calU_{A\to BE} \rho ( (\one_B \otimes \calW_{E\to AF})\calU_{A\to BE})^\dagger} \\= \ptr{E}{\calU_{A\to BE} \rho \calU_{A\to BE}} = \calN(\rho)
\end{multline}
Therefore, $\calN$ is a degraded version of the complementary channel of $\calR \circ \calE$. Hence, by Lemma~\ref{lm:fid-to-l1}, we have
\begin{multline}
F(\calN(\rho_A^1), \calN(\rho_A^0)) \geq \aleph(\lambda', F(\phi^1_{A'}, \phi^0_{A'}))\\ - 2\sqrt{1-F(\phi^1_{A'}, \phi^0_{A'})}\delta -\delta^2
\end{multline}
where 
\begin{align}
\lambda'
&\eqdef \sum_{x} \left(1 - F(\rho_A^x, \calR(\calE(\rho_A^x))) \nonumber\right.\\
&\left.\phantom{====}+ 4\sqrt{1 - F(\rho_A^x, \calR(\calE(\rho_A^x))) }\delta_x + 4\delta_x^2\right)^{\frac{1}{2}}\\
&=  2\delta_0  + \left(1 - F(\rho_A^1, \calR(\calE(\rho_A^1))\right.\nonumber\\
&\phantom{=====}\left.+ 4\sqrt{1 - F(\rho_A^1, \calR(\calE(\rho_A^1))}\delta_1 + 4\delta_1^2\right)^\frac{1}{2}.
\end{align}
By our assumption in \eqref{eq:eta-assump}, we have $\aleph(\lambda, F(\phi_A^1, \phi_A^0)) \geq \aleph(\lambda', F(\phi_A^1, \phi_A^0))$. Since $\aleph(x, y)$ is decreasing in $x$ for positive $x$, we have
\begin{align}
\lambda' \geq \lambda,
\end{align}
which yields that $1 - \eta \geq 1 - F(\rho_A^1, \calR(\calE(\rho_A^1))$. Substituting this inequality in \eqref{eq:rec-e-chan} completes the proof of our claim.

\end{proof}
\subsection{Proof of Lemma~\ref{lm:fid-to-l1}}
\label{sec:fid-to-l1}
We first prove a ``triangle'' inequality for fidelity measure, which follows from the triangle inequality for $C(\cdot, \cdot)$.
\begin{lemma}
\label{lm:triangle-fid}
Let $\rho, \sigma, \rho', \sigma' \in \calD(A)$ and let $\epsilon \eqdef C(\rho, \rho') + C(\sigma, \sigma')$. We then have
\begin{align}
F(\rho, \sigma) \geq F(\rho', \sigma') -2\sqrt{1 - F(\rho', \sigma') }\epsilon - \epsilon^2.
\end{align}
\end{lemma}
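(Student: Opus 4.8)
The plan is to push everything through the metric $C(\cdot,\cdot) \eqdef \sqrt{1-F(\cdot,\cdot)}$, which---as recalled in the Notation section---satisfies the triangle inequality; the fidelity bound then falls out by squaring. Concretely, writing $\epsilon = C(\rho,\rho') + C(\sigma,\sigma')$, I would first apply the triangle inequality for $C$ twice, inserting $\rho'$ and then $\sigma'$ as intermediate points, to obtain
\[
C(\rho,\sigma) \leq C(\rho,\rho') + C(\rho',\sigma) \leq C(\rho,\rho') + C(\rho',\sigma') + C(\sigma',\sigma) = C(\rho',\sigma') + \epsilon,
\]
where the final equality uses the symmetry $C(\sigma',\sigma)=C(\sigma,\sigma')$ together with the definition of $\epsilon$.

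The second step is to square this chain. Since every $C$ is a nonnegative square root and $\epsilon \geq 0$, both sides are nonnegative, so the inequality is preserved under squaring:
\[
C(\rho,\sigma)^2 \leq \bigl(C(\rho',\sigma') + \epsilon\bigr)^2 = C(\rho',\sigma')^2 + 2 C(\rho',\sigma')\epsilon + \epsilon^2.
\]
Substituting $C(\rho,\sigma)^2 = 1 - F(\rho,\sigma)$ and $C(\rho',\sigma')^2 = 1 - F(\rho',\sigma')$, cancelling the $1$'s, and using $C(\rho',\sigma') = \sqrt{1-F(\rho',\sigma')}$ yields precisely the claimed inequality.

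There is no genuine obstacle here: the entire content of the lemma is the triangle inequality for $C$, which is taken as known, and the only point needing a word of justification is the nonnegativity that licenses the squaring step. The lemma is essentially a restatement of the fact that $C$ is a metric in the language of fidelity, and it will be used downstream to convert the fidelity perturbation bounds arising in the proof of Lemma~\ref{lm:fid-to-l1} into the required form.
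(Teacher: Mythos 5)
Your proposal is correct and follows essentially the same argument as the paper: apply the triangle inequality for $C(\cdot,\cdot)$ to get $C(\rho,\sigma)\leq C(\rho',\sigma')+\epsilon$, square both sides, and rewrite $C^2 = 1-F$ to obtain the stated fidelity bound. The only cosmetic difference is that you spell out the two intermediate applications of the triangle inequality, whereas the paper states the chained bound in one step.
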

\begin{proof}
By the triangle inequality for $C(\cdot, \cdot)$, we have
\begin{align}
C(\rho, \sigma) \leq C(\rho', \sigma') + C(\rho, \rho') + C(\sigma, \sigma') = C(\rho', \sigma') + \epsilon
\end{align}
This could be written as
\begin{align}
\sqrt{1-F(\rho, \sigma)} \leq \sqrt{1-F(\rho', \sigma')} + \epsilon.
\end{align}
Therefore,
\begin{align}
1-F(\rho, \sigma) 
&\leq 1- F(\rho', \sigma') + 2\epsilon\sqrt{1-F(\rho', \sigma')} + \epsilon^2,
\end{align}
which yields the desired bound.
\end{proof}
We now prove a result similar to Lemma~\ref{lm:fid-to-l1} when $\rho^0_A$ and $\rho^1_A$ are pure.
\begin{lemma}
\label{lm:pure-dist}
Let $A$ and $B$ be finite dimensional quantum systems such that $A$ is a composition of two sub-systems $A'$ and $A''$. Let $\ket{\phi^0}_{A'}$ and $\ket{\phi^1}_{A'}$ be pure states in $\calH_{A'}$ and $\nu_{A''}$ be a mixed state in $\calD(\calH_{A''})$. Let us define $\rho_A^x \eqdef \phi^x_{A'} \otimes \nu_{A''}$. Let $V:\calH_A\to \calH_A\otimes \calH_B$ be an isometry and define $\psi^x_{AB} \eqdef V\rho_A^x V^\dagger$. Let 
\begin{align}
\epsilon \eqdef \sum_x C(\psi^x_A, \rho_A^x)
\end{align}
 We then have
\begin{align}
F(\psi^1_B, \psi^0_B) \geq  \aleph(\epsilon, F(\phi^1_{A'}, \phi^0_{A'}))
\end{align}
\end{lemma}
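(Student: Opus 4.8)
The plan is to reduce the claim to an \emph{exact} overlap identity by purifying the mixed factor $\nu_{A''}$ with a reference system, and then to transport that identity to the approximate setting via Uhlmann's theorem and the fidelity triangle inequality of Lemma~\ref{lm:triangle-fid}. Throughout, write $\delta_x \eqdef C(\psi^x_A,\rho^x_A)$ so that $\epsilon=\delta_0+\delta_1$, set $y\eqdef F(\phi^1_{A'},\phi^0_{A'})=\abs{\braket{\phi^1}{\phi^0}}^2$, and define $\kappa\eqdef\frac{2\sqrt{1-y}\epsilon+\epsilon^2}{y}$. A direct algebraic check shows $\aleph(\epsilon,y)=1-\pr{\sqrt{\kappa}+\epsilon}^2$, so it suffices to prove the single inequality $C(\psi^1_B,\psi^0_B)\le\sqrt{\kappa}+\epsilon$, from which the lower bound on $F(\psi^1_B,\psi^0_B)=1-C(\psi^1_B,\psi^0_B)^2$ follows immediately.

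First I would introduce a reference $R$, fix a purification $\ket{\nu}_{A''R}$ of $\nu_{A''}$, and set $\ket{\Psi^x}_{ABR}\eqdef(V\otimes\one_R)\ket{\phi^x}_{A'}\ket{\nu}_{A''R}$, which is a purification of $\psi^x_{AB}$ onto $R$, hence of $\psi^x_A$ onto $BR$. Because $V^\dagger V=\one_A$ and $\braket{\nu}{\nu}=1$, the two purifications satisfy the \emph{exact} identity $\braket{\Psi^1}{\Psi^0}=\braket{\phi^1}{\phi^0}$, so $F(\Psi^1,\Psi^0)=y$. This is the key step that injects the overlap $y$ into the argument and is the reason a reference $R$ must be brought in (the states $\psi^x_{AB}$ are themselves mixed, and the clean overlap computation requires pure global states).

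Next I would use Uhlmann's theorem to replace the approximate $\psi^x_A$ by the exact product $\rho^x_A$. For each $x$ it provides a purification $\ket{\Omega^x}_{ABR}$ of $\rho^x_A$ onto $BR$ with $\abs{\braket{\Psi^x}{\Omega^x}}^2=F(\psi^x_A,\rho^x_A)\ge 1-\delta_x^2$, i.e.\ $C(\Psi^x,\Omega^x)\le\delta_x$. Since $\rho^x_A=\phi^x_{A'}\otimes\nu_{A''}$ has the \emph{pure} $A'$-marginal $\phi^x_{A'}$, every purification of it factorizes, so $\ket{\Omega^x}=\ket{\phi^x}_{A'}\otimes\ket{\eta^x}_{A''BR}$ with $\ket{\eta^x}$ a purification of $\nu_{A''}$. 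Consequently $F(\Omega^1,\Omega^0)=y\,\abs{\braket{\eta^1}{\eta^0}}^2$, and the $B$-marginal of $\ket{\Omega^x}$ is $\eta^x_B\eqdef\ptr{A''R}{\kb{\eta^x}}$. Exploiting purity here is exactly what lets the factor $\braket{\phi^1}{\phi^0}$ be pulled out multiplicatively.

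Finally I would chain the estimates. Applying Lemma~\ref{lm:triangle-fid} with $(\rho,\sigma)=(\Omega^1,\Omega^0)$ and $(\rho',\sigma')=(\Psi^1,\Psi^0)$, for which $C(\Omega^1,\Psi^1)+C(\Omega^0,\Psi^0)\le\epsilon$, and using that the resulting bound is decreasing in $\epsilon$, gives $F(\Omega^1,\Omega^0)\ge y-2\sqrt{1-y}\epsilon-\epsilon^2$; dividing by $y$ yields $\abs{\braket{\eta^1}{\eta^0}}^2\ge 1-\kappa$, hence $C(\eta^1,\eta^0)\le\sqrt{\kappa}$. Monotonicity of $C$ under $\ptr{A''R}{\cdot}$ gives $C(\eta^1_B,\eta^0_B)\le\sqrt{\kappa}$, while monotonicity under $\ptr{AR}{\cdot}$ combined with $C(\Psi^x,\Omega^x)\le\delta_x$ gives $C(\psi^x_B,\eta^x_B)\le\delta_x$. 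The triangle inequality for $C$ then closes the argument, $C(\psi^1_B,\psi^0_B)\le C(\psi^1_B,\eta^1_B)+C(\eta^1_B,\eta^0_B)+C(\eta^0_B,\psi^0_B)\le\delta_1+\sqrt{\kappa}+\delta_0=\epsilon+\sqrt{\kappa}$, as required. The main obstacle is the bookkeeping of purifications on the correct systems and the justification that Uhlmann's maximizer $\ket{\Omega^x}$ factorizes through the pure marginal $\phi^x_{A'}$; the genuinely delicate point, reflected in the division by $y$, is that the bound degrades as $\phi^0_{A'}$ and $\phi^1_{A'}$ become nearly orthogonal ($y\to 0$), where it becomes vacuous.
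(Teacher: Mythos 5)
Your proof is correct and follows essentially the same route as the paper's: purify $\nu_{A''}$ with a reference $R$, invoke Uhlmann's theorem to replace the approximate states $\psi^x_A$ by factorized purifications of $\rho^x_A$, pull out the overlap $F(\phi^1_{A'},\phi^0_{A'})$ multiplicatively, and close with triangle inequalities and monotonicity under partial trace. The only difference is cosmetic: you work with $C$ throughout and exploit the identity $\aleph(\epsilon,y)=1-\pr{\sqrt{\kappa}+\epsilon}^2$, whereas the paper chains two applications of Lemma~\ref{lm:triangle-fid} in fidelity form (writing your $\ket{\eta^x}$ as $U^x\ket{\nu}_{A''R}$ for Uhlmann isometries $U^x$); the resulting bounds are identical.
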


\begin{proof}
Let $\ket{\nu}_{RA''}$ be a purification of $\nu_{A''}$ and define $\ket{\psi^x}_{RAB} \eqdef \one_R \otimes V (\ket{\phi^x}_{A'} \otimes \ket{\nu}_{A''R})$ (which is consistent with the definition of $\psi^x_{AB}$). By Uhlmann's theorem, there exist isometries $U^0$ and $U^1$ from $\calH_{R}$ to $\calH_{R} \otimes \calH_{B}$ such that
\begin{align}
C(\psi_A^x, \rho^x_A) = C(\psi_{ABR}^x, \phi^x_{A'}\otimes U^x{\nu}_{A''R}(U^x)^\dagger )
\end{align}

Furthermore, note that
\begin{align}
&F(\phi^1_{A'}, \phi^0_{A'}) \\
&F=(\phi^1_{A'}\otimes \nu_{A''R}, \phi^0_{A'}\otimes\nu_{A''R}) \\
&\stackrel{(a)}{=} F(\psi^1_{ABR}, \psi^0_{ABR}) \\
&\stackrel{(b)}{\leq} F(\phi^1_{A'}\otimes U^1{\nu}_{A''R}(U^1)^\dagger, \phi^0_{A'}\otimes U^0{\nu}_{A''R}(U^0)^\dagger)\nonumber \\
&\phantom{=======}+ 2\sqrt{1- F(\psi_{ABR}^1, \psi^0_{ABR} )} \epsilon + \epsilon^2\\
&= F(\phi^1_{A'}\otimes U^1{\nu}_{A''R}(U^1)^\dagger, \phi^0_{A'}\otimes U^0{\nu}_{A''R}(U^0)^\dagger)\nonumber \\
&\phantom{=======} + 2\sqrt{1- F(\phi_{A}^1, \phi^0_{A} )} \epsilon + \epsilon^2\\
&= F(\phi^1_{A'}, \phi^0_{A'})  F( U^1{\nu}_{A''R}(U^1)^\dagger,  U^0{\nu}_{A''R}(U^0)^\dagger)\nonumber \\
&\phantom{=======} + 2\sqrt{1- F(\phi_{A}^1, \phi^0_{A} )} \epsilon + \epsilon^2,
\end{align}
where $(a)$ follows since $V_{A\to AB}$ is an isometry, and $(b)$ follows from Lemma~\ref{lm:triangle-fid} 
 Therefore, we have
\begin{multline}
 F( U^1{\nu}_{A''R}(U^1)^\dagger,  U^0{\nu}_{A''R}(U^0)^\dagger) \\
 \geq 1 - \frac{2\sqrt{1-F(\phi_A^1, \phi_A^0)}\epsilon + \epsilon^2}{F(\phi^1_A, \phi_A^0)}
\end{multline}
Using Lemma~\ref{lm:triangle-fid} again, we obtain
\begin{align}
F(\psi^1_B, \psi^0_B) 
&\geq F( U^1{\nu}_{A''R}(U^1)^\dagger,  U^0{\nu}_{A''R}(U^0)^\dagger) \nonumber\\
&- 2\sqrt{1-F( U^1{\nu}_{A''R}(U^1)^\dagger,  U^0{\nu}_{A''R}(U^0)^\dagger)}\epsilon - \epsilon^2 \\
&\geq1 - \frac{2\sqrt{1-F(\phi_A^1, \phi_A^0)}\epsilon + \epsilon^2}{F(\phi^1_A, \phi_A^0)}\nonumber\\
&\phantom{===}- 2\sqrt{\frac{2\sqrt{1-F(\phi_A^1, \phi_A^0)}\epsilon + \epsilon^2}{F(\phi^1_A, \phi_A^0)}}\epsilon - \epsilon^2\\
&= \aleph(\epsilon, F(\phi^1_A, \phi^0_A)).
\end{align}
\end{proof}
We now prove Lemma~\ref{lm:fid-to-l1}. Note that for 
\begin{align}
\lambda\eqdef C(\phi^0, \calN(\phi^0)) + C(\phi^1, \calN(\phi^1)),
\end{align}
we have
\begin{align}
&F(\calE(\rho_A^1), \calE(\rho_A^0)) \\
&\stackrel{(a)}{\geq} F(\calE(\phi_A^1), \calE(\phi_A^0)) - 2\sqrt{1 - F(\calE(\phi_A^1), \calE(\phi_A^0))}\delta -\delta^2\\
&\geq F(\calE(\phi_A^1), \calE(\phi_A^0)) - 2\sqrt{1 - F(\phi_A^1, \phi_A^0)}\delta -\delta^2\\
&\stackrel{(b)}{\geq} \aleph(\lambda, F(\phi_A^1, \phi^0_A)),  - 2\sqrt{1 - F(\phi_A^1, \phi_A^0)}\delta -\delta^2,
\end{align}
where $(a)$ follows from Lemma~\ref{lm:triangle-fid}, and $(b)$ follows from Lemma~\ref{lm:pure-dist}
Additionally, we have
\begin{align}
&F(\phi^x, \calN(\phi^x)) \\
&\geq F(\rho^x, \calN(\rho^x)) - 4 \sqrt{1-F(\rho^x, \calN(\rho^x))}\delta_x - 4\delta_x^2\\
&\geq 1 -\epsilon_x -   4 \sqrt{\epsilon_x}\delta_x - 4\delta_x^2,
\end{align}
for $x=0, 1$. This implies that $\lambda \leq \sum_x \sqrt{\epsilon_x + 4\sqrt{\epsilon_x}\delta_x + 4\delta_x^2}$.
\bibliographystyle{apsrev4-1}
\bibliography{covert}

\begin{thebibliography}{29}%
\makeatletter
\providecommand \@ifxundefined [1]{%
 \@ifx{#1\undefined}
}%
\providecommand \@ifnum [1]{%
 \ifnum #1\expandafter \@firstoftwo
 \else \expandafter \@secondoftwo
 \fi
}%
\providecommand \@ifx [1]{%
 \ifx #1\expandafter \@firstoftwo
 \else \expandafter \@secondoftwo
 \fi
}%
\providecommand \natexlab [1]{#1}%
\providecommand \enquote  [1]{``#1''}%
\providecommand \bibnamefont  [1]{#1}%
\providecommand \bibfnamefont [1]{#1}%
\providecommand \citenamefont [1]{#1}%
\providecommand \href@noop [0]{\@secondoftwo}%
\providecommand \href [0]{\begingroup \@sanitize@url \@href}%
\providecommand \@href[1]{\@@startlink{#1}\@@href}%
\providecommand \@@href[1]{\endgroup#1\@@endlink}%
\providecommand \@sanitize@url [0]{\catcode `\\12\catcode `\$12\catcode
  `\&12\catcode `\#12\catcode `\^12\catcode `\_12\catcode `\%12\relax}%
\providecommand \@@startlink[1]{}%
\providecommand \@@endlink[0]{}%
\providecommand \url  [0]{\begingroup\@sanitize@url \@url }%
\providecommand \@url [1]{\endgroup\@href {#1}{\urlprefix }}%
\providecommand \urlprefix  [0]{URL }%
\providecommand \Eprint [0]{\href }%
\providecommand \doibase [0]{http://dx.doi.org/}%
\providecommand \selectlanguage [0]{\@gobble}%
\providecommand \bibinfo  [0]{\@secondoftwo}%
\providecommand \bibfield  [0]{\@secondoftwo}%
\providecommand \translation [1]{[#1]}%
\providecommand \BibitemOpen [0]{}%
\providecommand \bibitemStop [0]{}%
\providecommand \bibitemNoStop [0]{.\EOS\space}%
\providecommand \EOS [0]{\spacefactor3000\relax}%
\providecommand \BibitemShut  [1]{\csname bibitem#1\endcsname}%
\let\auto@bib@innerbib\@empty
\bibitem [{\citenamefont {Diamanti}\ \emph {et~al.}(2016)\citenamefont
  {Diamanti}, \citenamefont {Lo}, \citenamefont {Qi},\ and\ \citenamefont
  {Yuan}}]{Diamanti2016}%
  \BibitemOpen
  \bibfield  {author} {\bibinfo {author} {\bibfnamefont {E.}~\bibnamefont
  {Diamanti}}, \bibinfo {author} {\bibfnamefont {H.-K.}\ \bibnamefont {Lo}},
  \bibinfo {author} {\bibfnamefont {B.}~\bibnamefont {Qi}}, \ and\ \bibinfo
  {author} {\bibfnamefont {Z.}~\bibnamefont {Yuan}},\ }\href {\doibase
  10.1038/npjqi.2016.25} {\bibfield  {journal} {\bibinfo  {journal} {npj
  Quantum Information}\ }\textbf {\bibinfo {volume} {2}} (\bibinfo {year}
  {2016}),\ 10.1038/npjqi.2016.25}\BibitemShut {NoStop}%
\bibitem [{\citenamefont {Bennett~Ch}\ and\ \citenamefont
  {Brassard}(1984)}]{bennett1984quantum}%
  \BibitemOpen
  \bibfield  {author} {\bibinfo {author} {\bibfnamefont {H.}~\bibnamefont
  {Bennett~Ch}}\ and\ \bibinfo {author} {\bibfnamefont {G.}~\bibnamefont
  {Brassard}},\ }in\ \href@noop {} {\emph {\bibinfo {booktitle} {Conf. on
  Computers, Systems and Signal Processing (Bangalore, India, Dec. 1984)}}}\
  (\bibinfo {year} {1984})\ pp.\ \bibinfo {pages} {175--9}\BibitemShut
  {NoStop}%
\bibitem [{\citenamefont {Ekert}(1991)}]{Ekert1991}%
  \BibitemOpen
  \bibfield  {author} {\bibinfo {author} {\bibfnamefont {A.~K.}\ \bibnamefont
  {Ekert}},\ }\href {\doibase 10.1103/PhysRevLett.67.661} {\bibfield  {journal}
  {\bibinfo  {journal} {Phys. Rev. Lett.}\ }\textbf {\bibinfo {volume} {67}},\
  \bibinfo {pages} {661} (\bibinfo {year} {1991})}\BibitemShut {NoStop}%
\bibitem [{\citenamefont {Renner}(2008)}]{renner2008security}%
  \BibitemOpen
  \bibfield  {author} {\bibinfo {author} {\bibfnamefont {R.}~\bibnamefont
  {Renner}},\ }\href@noop {} {\bibfield  {journal} {\bibinfo  {journal}
  {International Journal of Quantum Information}\ }\textbf {\bibinfo {volume}
  {6}},\ \bibinfo {pages} {1} (\bibinfo {year} {2008})}\BibitemShut {NoStop}%
\bibitem [{\citenamefont {Ac\'{\i}n}\ \emph {et~al.}(2007)\citenamefont
  {Ac\'{\i}n}, \citenamefont {Brunner}, \citenamefont {Gisin}, \citenamefont
  {Massar}, \citenamefont {Pironio},\ and\ \citenamefont {Scarani}}]{Acin2007}%
  \BibitemOpen
  \bibfield  {author} {\bibinfo {author} {\bibfnamefont {A.}~\bibnamefont
  {Ac\'{\i}n}}, \bibinfo {author} {\bibfnamefont {N.}~\bibnamefont {Brunner}},
  \bibinfo {author} {\bibfnamefont {N.}~\bibnamefont {Gisin}}, \bibinfo
  {author} {\bibfnamefont {S.}~\bibnamefont {Massar}}, \bibinfo {author}
  {\bibfnamefont {S.}~\bibnamefont {Pironio}}, \ and\ \bibinfo {author}
  {\bibfnamefont {V.}~\bibnamefont {Scarani}},\ }\href {\doibase
  10.1103/PhysRevLett.98.230501} {\bibfield  {journal} {\bibinfo  {journal}
  {Phys. Rev. Lett.}\ }\textbf {\bibinfo {volume} {98}},\ \bibinfo {pages}
  {230501} (\bibinfo {year} {2007})}\BibitemShut {NoStop}%
\bibitem [{\citenamefont {Bash}\ \emph {et~al.}(2013)\citenamefont {Bash},
  \citenamefont {Goeckel},\ and\ \citenamefont {Towsley}}]{Bash2013}%
  \BibitemOpen
  \bibfield  {author} {\bibinfo {author} {\bibfnamefont {B.}~\bibnamefont
  {Bash}}, \bibinfo {author} {\bibfnamefont {D.}~\bibnamefont {Goeckel}}, \
  and\ \bibinfo {author} {\bibfnamefont {D.}~\bibnamefont {Towsley}},\ }\href
  {\doibase 10.1109/JSAC.2013.130923} {\bibfield  {journal} {\bibinfo
  {journal} {{IEEE} {J}ournal of Selected Areas in Communications}\ }\textbf
  {\bibinfo {volume} {31}},\ \bibinfo {pages} {1921} (\bibinfo {year}
  {2013})}\BibitemShut {NoStop}%
\bibitem [{\citenamefont {Wang}\ \emph {et~al.}(2016)\citenamefont {Wang},
  \citenamefont {Wornell},\ and\ \citenamefont {Zheng}}]{Wang2016b}%
  \BibitemOpen
  \bibfield  {author} {\bibinfo {author} {\bibfnamefont {L.}~\bibnamefont
  {Wang}}, \bibinfo {author} {\bibfnamefont {G.~W.}\ \bibnamefont {Wornell}}, \
  and\ \bibinfo {author} {\bibfnamefont {L.}~\bibnamefont {Zheng}},\ }\href
  {\doibase 10.1109/TIT.2016.2548471} {\bibfield  {journal} {\bibinfo
  {journal} {IEEE Trans. Info. Theory}\ }\textbf {\bibinfo {volume} {62}},\
  \bibinfo {pages} {3493} (\bibinfo {year} {2016})}\BibitemShut {NoStop}%
\bibitem [{\citenamefont {Bloch}(2016)}]{Bloch2016a}%
  \BibitemOpen
  \bibfield  {author} {\bibinfo {author} {\bibfnamefont {M.~R.}\ \bibnamefont
  {Bloch}},\ }\href {\doibase 10.1109/TIT.2016.2530089} {\bibfield  {journal}
  {\bibinfo  {journal} {IEEE Trans. Info. Theory}\ }\textbf {\bibinfo {volume}
  {62}},\ \bibinfo {pages} {2334} (\bibinfo {year} {2016})}\BibitemShut
  {NoStop}%
\bibitem [{\citenamefont {Sheikholeslami}\ \emph {et~al.}(2016)\citenamefont
  {Sheikholeslami}, \citenamefont {Bash}, \citenamefont {Towsley},
  \citenamefont {Goeckel},\ and\ \citenamefont {Guha}}]{Sheikholeslami2016}%
  \BibitemOpen
  \bibfield  {author} {\bibinfo {author} {\bibfnamefont {A.}~\bibnamefont
  {Sheikholeslami}}, \bibinfo {author} {\bibfnamefont {B.~A.}\ \bibnamefont
  {Bash}}, \bibinfo {author} {\bibfnamefont {D.}~\bibnamefont {Towsley}},
  \bibinfo {author} {\bibfnamefont {D.}~\bibnamefont {Goeckel}}, \ and\
  \bibinfo {author} {\bibfnamefont {S.}~\bibnamefont {Guha}},\ }in\ \href
  {\doibase 10.1109/ISIT.2016.7541662} {\emph {\bibinfo {booktitle} {Proc. of
  IEEE International Symposium on Information Theory}}}\ (\bibinfo {address}
  {Barcelona, Spain},\ \bibinfo {year} {2016})\ pp.\ \bibinfo {pages}
  {2064--2068}\BibitemShut {NoStop}%
\bibitem [{\citenamefont {Arrazola}\ and\ \citenamefont
  {Scarani}(2016)}]{Arrazola2016}%
  \BibitemOpen
  \bibfield  {author} {\bibinfo {author} {\bibfnamefont {J.~M.}\ \bibnamefont
  {Arrazola}}\ and\ \bibinfo {author} {\bibfnamefont {V.}~\bibnamefont
  {Scarani}},\ }\href {\doibase 10.1103/PhysRevLett.117.250503} {\bibfield
  {journal} {\bibinfo  {journal} {Phys. Rev. Lett.}\ }\textbf {\bibinfo
  {volume} {117}},\ \bibinfo {pages} {250503} (\bibinfo {year} {2016})},\
  \Eprint {http://arxiv.org/abs/1604.05438v3} {1604.05438v3} \BibitemShut
  {NoStop}%
\bibitem [{\citenamefont {Bash}\ \emph {et~al.}(2015)\citenamefont {Bash},
  \citenamefont {Gheorghe}, \citenamefont {Patel}, \citenamefont {Habif},
  \citenamefont {Goeckel}, \citenamefont {Towsley},\ and\ \citenamefont
  {Guha}}]{Bash2015a}%
  \BibitemOpen
  \bibfield  {author} {\bibinfo {author} {\bibfnamefont {B.~A.}\ \bibnamefont
  {Bash}}, \bibinfo {author} {\bibfnamefont {A.~H.}\ \bibnamefont {Gheorghe}},
  \bibinfo {author} {\bibfnamefont {M.}~\bibnamefont {Patel}}, \bibinfo
  {author} {\bibfnamefont {J.~L.}\ \bibnamefont {Habif}}, \bibinfo {author}
  {\bibfnamefont {D.}~\bibnamefont {Goeckel}}, \bibinfo {author} {\bibfnamefont
  {D.}~\bibnamefont {Towsley}}, \ and\ \bibinfo {author} {\bibfnamefont
  {S.}~\bibnamefont {Guha}},\ }\href@noop {} {\bibfield  {journal} {\bibinfo
  {journal} {{N}ature {C}ommunications}\ }\textbf {\bibinfo {volume} {6}},\
  (\bibinfo {year} {2015})}\BibitemShut {NoStop}%
\bibitem [{\citenamefont {Tahmasbi}\ and\ \citenamefont
  {Bloch}(2019)}]{Tahmasbi2018b}%
  \BibitemOpen
  \bibfield  {author} {\bibinfo {author} {\bibfnamefont {M.}~\bibnamefont
  {Tahmasbi}}\ and\ \bibinfo {author} {\bibfnamefont {M.~R.}\ \bibnamefont
  {Bloch}},\ }\href {\doibase 10.1103/PhysRevA.99.052329} {\bibfield  {journal}
  {\bibinfo  {journal} {Phys. Rev. A}\ }\textbf {\bibinfo {volume} {99}},\
  \bibinfo {pages} {052329} (\bibinfo {year} {2019})}\BibitemShut {NoStop}%
\bibitem [{\citenamefont {{Watanabe}}\ and\ \citenamefont
  {{Hayashi}}(2013)}]{Watanabe2013}%
  \BibitemOpen
  \bibfield  {author} {\bibinfo {author} {\bibfnamefont {S.}~\bibnamefont
  {{Watanabe}}}\ and\ \bibinfo {author} {\bibfnamefont {M.}~\bibnamefont
  {{Hayashi}}},\ }in\ \href {\doibase 10.1109/ISIT.2013.6620720} {\emph
  {\bibinfo {booktitle} {2013 IEEE International Symposium on Information
  Theory}}}\ (\bibinfo {address} {Istanbul, Turkey},\ \bibinfo {year} {2013})\
  pp.\ \bibinfo {pages} {2715--2719}\BibitemShut {NoStop}%
\bibitem [{\citenamefont {Kadampot}\ \emph {et~al.}(2018)\citenamefont
  {Kadampot}, \citenamefont {Tahmasbi},\ and\ \citenamefont
  {Bloch}}]{Kadampot2018}%
  \BibitemOpen
  \bibfield  {author} {\bibinfo {author} {\bibfnamefont {I.~A.}\ \bibnamefont
  {Kadampot}}, \bibinfo {author} {\bibfnamefont {M.}~\bibnamefont {Tahmasbi}},
  \ and\ \bibinfo {author} {\bibfnamefont {M.~R.}\ \bibnamefont {Bloch}},\ }in\
  \href {\doibase 10.1109/ISIT.2018.8437587} {\emph {\bibinfo {booktitle}
  {Proc. of IEEE International Symposium on Information Theory}}}\ (\bibinfo
  {address} {Vail, CO},\ \bibinfo {year} {2018})\ pp.\ \bibinfo {pages}
  {1864--1868}\BibitemShut {NoStop}%
\bibitem [{Note1()}]{Note1}%
  \BibitemOpen
  \bibinfo {note} {One can associate a mixed state to no communication, but in
  bosonic systems, the natural choice for the idle state is a pure vacuum
  state.}\BibitemShut {Stop}%
\bibitem [{\citenamefont {{Bloch}}\ and\ \citenamefont
  {{Guha}}(2017)}]{Bloch2017b}%
  \BibitemOpen
  \bibfield  {author} {\bibinfo {author} {\bibfnamefont {M.~R.}\ \bibnamefont
  {{Bloch}}}\ and\ \bibinfo {author} {\bibfnamefont {S.}~\bibnamefont
  {{Guha}}},\ }in\ \href {\doibase 10.1109/ISIT.2017.8007045} {\emph {\bibinfo
  {booktitle} {2017 IEEE International Symposium on Information Theory}}}\
  (\bibinfo {address} {Aachen, Germany},\ \bibinfo {year} {2017})\ pp.\
  \bibinfo {pages} {2825--2829}\BibitemShut {NoStop}%
\bibitem [{\citenamefont {Hayashi}(2006)}]{hayashi2006quantum}%
  \BibitemOpen
  \bibfield  {author} {\bibinfo {author} {\bibfnamefont {M.}~\bibnamefont
  {Hayashi}},\ }\href@noop {} {\emph {\bibinfo {title} {Quantum information}}}\
  (\bibinfo  {publisher} {Springer},\ \bibinfo {year} {2006})\BibitemShut
  {NoStop}%
\bibitem [{\citenamefont {Martinez-Mateo}\ \emph {et~al.}(2013)\citenamefont
  {Martinez-Mateo}, \citenamefont {Elkouss},\ and\ \citenamefont
  {Martin}}]{Mateo2013}%
  \BibitemOpen
  \bibfield  {author} {\bibinfo {author} {\bibfnamefont {J.}~\bibnamefont
  {Martinez-Mateo}}, \bibinfo {author} {\bibfnamefont {D.}~\bibnamefont
  {Elkouss}}, \ and\ \bibinfo {author} {\bibfnamefont {V.}~\bibnamefont
  {Martin}},\ }\href {https://doi.org/10.1038/srep01576
  http://10.0.4.14/srep01576} {\bibfield  {journal} {\bibinfo  {journal}
  {Scientific Reports}\ }\textbf {\bibinfo {volume} {3}},\ \bibinfo {pages}
  {1576} (\bibinfo {year} {2013})}\BibitemShut {NoStop}%
\bibitem [{\citenamefont {{Chou}}\ and\ \citenamefont
  {{Bloch}}(2016)}]{Chou2016}%
  \BibitemOpen
  \bibfield  {author} {\bibinfo {author} {\bibfnamefont {R.~A.}\ \bibnamefont
  {{Chou}}}\ and\ \bibinfo {author} {\bibfnamefont {M.~R.}\ \bibnamefont
  {{Bloch}}},\ }\href {\doibase 10.1109/TIT.2016.2539145} {\bibfield  {journal}
  {\bibinfo  {journal} {IEEE Trans. Info. Theory}\ }\textbf {\bibinfo {volume}
  {62}},\ \bibinfo {pages} {2410} (\bibinfo {year} {2016})}\BibitemShut
  {NoStop}%
\bibitem [{\citenamefont {Tahmasbi}\ and\ \citenamefont
  {Bloch}(2018)}]{tahmasbi2018framework}%
  \BibitemOpen
  \bibfield  {author} {\bibinfo {author} {\bibfnamefont {M.}~\bibnamefont
  {Tahmasbi}}\ and\ \bibinfo {author} {\bibfnamefont {M.~R.}\ \bibnamefont
  {Bloch}},\ }\href@noop {} {\bibfield  {journal} {\bibinfo  {journal} {arXiv
  preprint arXiv:1811.05626}\ } (\bibinfo {year} {2018})}\BibitemShut {NoStop}%
\bibitem [{Note2()}]{Note2}%
  \BibitemOpen
  \bibinfo {note} {In the classical setting, the authors of~\cite {Bloch2016b}
  showed the upper-bound with a factor of $1/2$ on the right hand side. While
  we conjecture that an extension of such upper-bound to the quantum setting is
  possible, we could only prove the upper-bound without the factor
  $1/2$}\BibitemShut {NoStop}%
\bibitem [{Note3()}]{Note3}%
  \BibitemOpen
  \bibinfo {note} {For any $h'$, we can choose $h$ such that $\protect
  \ensuremath {\left |{{\protect \mathcal {V}}}\right |}h'\geqslant h\geqslant
  h'$ and $\protect \ensuremath {\left |{{\protect \mathcal {V}}}\right |}^\ell
  $ is divisible by $h$; hence, this condition adds at most $\protect \qopname
  \relax o{log}\protect \ensuremath {\left |{{\protect \mathcal {V}}}\right |}
  = O(\protect \qopname \relax o{log}n)$ of penalty on $\protect \qopname
  \relax o{log}h$.}\BibitemShut {Stop}%
\bibitem [{\citenamefont {Furrer}\ \emph {et~al.}(2012)\citenamefont {Furrer},
  \citenamefont {Franz}, \citenamefont {Berta}, \citenamefont {Leverrier},
  \citenamefont {Scholz}, \citenamefont {Tomamichel},\ and\ \citenamefont
  {Werner}}]{Franz2012}%
  \BibitemOpen
  \bibfield  {author} {\bibinfo {author} {\bibfnamefont {F.}~\bibnamefont
  {Furrer}}, \bibinfo {author} {\bibfnamefont {T.}~\bibnamefont {Franz}},
  \bibinfo {author} {\bibfnamefont {M.}~\bibnamefont {Berta}}, \bibinfo
  {author} {\bibfnamefont {A.}~\bibnamefont {Leverrier}}, \bibinfo {author}
  {\bibfnamefont {V.~B.}\ \bibnamefont {Scholz}}, \bibinfo {author}
  {\bibfnamefont {M.}~\bibnamefont {Tomamichel}}, \ and\ \bibinfo {author}
  {\bibfnamefont {R.~F.}\ \bibnamefont {Werner}},\ }\href {\doibase
  10.1103/PhysRevLett.109.100502} {\bibfield  {journal} {\bibinfo  {journal}
  {Phys. Rev. Lett.}\ }\textbf {\bibinfo {volume} {109}},\ \bibinfo {pages}
  {100502} (\bibinfo {year} {2012})}\BibitemShut {NoStop}%
\bibitem [{\citenamefont {Leverrier}\ \emph {et~al.}(2013)\citenamefont
  {Leverrier}, \citenamefont {Garc\'{\i}a-Patr\'on}, \citenamefont {Renner},\
  and\ \citenamefont {Cerf}}]{Leverrier2013}%
  \BibitemOpen
  \bibfield  {author} {\bibinfo {author} {\bibfnamefont {A.}~\bibnamefont
  {Leverrier}}, \bibinfo {author} {\bibfnamefont {R.}~\bibnamefont
  {Garc\'{\i}a-Patr\'on}}, \bibinfo {author} {\bibfnamefont {R.}~\bibnamefont
  {Renner}}, \ and\ \bibinfo {author} {\bibfnamefont {N.~J.}\ \bibnamefont
  {Cerf}},\ }\href {\doibase 10.1103/PhysRevLett.110.030502} {\bibfield
  {journal} {\bibinfo  {journal} {Phys. Rev. Lett.}\ }\textbf {\bibinfo
  {volume} {110}},\ \bibinfo {pages} {030502} (\bibinfo {year}
  {2013})}\BibitemShut {NoStop}%
\bibitem [{\citenamefont {Renner}\ and\ \citenamefont
  {Cirac}(2009)}]{Renner2009}%
  \BibitemOpen
  \bibfield  {author} {\bibinfo {author} {\bibfnamefont {R.}~\bibnamefont
  {Renner}}\ and\ \bibinfo {author} {\bibfnamefont {J.~I.}\ \bibnamefont
  {Cirac}},\ }\href {\doibase 10.1103/PhysRevLett.102.110504} {\bibfield
  {journal} {\bibinfo  {journal} {Phys. Rev. Lett.}\ }\textbf {\bibinfo
  {volume} {102}},\ \bibinfo {pages} {110504} (\bibinfo {year}
  {2009})}\BibitemShut {NoStop}%
\bibitem [{\citenamefont {Bloch}\ and\ \citenamefont
  {Guha}(2017)}]{Bloch2016b}%
  \BibitemOpen
  \bibfield  {author} {\bibinfo {author} {\bibfnamefont {M.}~\bibnamefont
  {Bloch}}\ and\ \bibinfo {author} {\bibfnamefont {S.}~\bibnamefont {Guha}},\
  }\href@noop {} {\enquote {\bibinfo {title} {{Pulse Position Modulation-Based
  Covert Communications}},}\ }\bibinfo {howpublished} {accepted in \emph{IEEE
  International Symposium on Information Theory}} (\bibinfo {year}
  {2017})\BibitemShut {NoStop}%
\bibitem [{\citenamefont {{Ahlswede}}\ and\ \citenamefont
  {{Csiszar}}(1993)}]{Ahlswede1993}%
  \BibitemOpen
  \bibfield  {author} {\bibinfo {author} {\bibfnamefont {R.}~\bibnamefont
  {{Ahlswede}}}\ and\ \bibinfo {author} {\bibfnamefont {I.}~\bibnamefont
  {{Csiszar}}},\ }\href {\doibase 10.1109/18.243431} {\bibfield  {journal}
  {\bibinfo  {journal} {IEEE Transactions on Information Theory}\ }\textbf
  {\bibinfo {volume} {39}},\ \bibinfo {pages} {1121} (\bibinfo {year}
  {1993})}\BibitemShut {NoStop}%
\bibitem [{\citenamefont {Wilde}(2013)}]{wilde2013quantum}%
  \BibitemOpen
  \bibfield  {author} {\bibinfo {author} {\bibfnamefont {M.~M.}\ \bibnamefont
  {Wilde}},\ }\href@noop {} {\emph {\bibinfo {title} {Quantum information
  theory}}}\ (\bibinfo  {publisher} {Cambridge University Press},\ \bibinfo
  {year} {2013})\BibitemShut {NoStop}%
\bibitem [{\citenamefont {Bellare}\ and\ \citenamefont
  {Tessaro}(2012)}]{bellare2012polynomial}%
  \BibitemOpen
  \bibfield  {author} {\bibinfo {author} {\bibfnamefont {M.}~\bibnamefont
  {Bellare}}\ and\ \bibinfo {author} {\bibfnamefont {S.}~\bibnamefont
  {Tessaro}},\ }\href@noop {} {\bibfield  {journal} {\bibinfo  {journal} {arXiv
  preprint arXiv:1201.3160}\ } (\bibinfo {year} {2012})}\BibitemShut {NoStop}%
\end{thebibliography}%

\end{document}